\documentclass[twocolumn]{autart}    

\usepackage{graphicx}          

\usepackage{cite}
\usepackage[colon]{natbib}
\usepackage{amsmath,amssymb,amsfonts}
\usepackage{threeparttable}
\usepackage{color}
\usepackage{xcolor}
\usepackage{makecell}
\usepackage{algorithmic}
\usepackage{algorithm,algorithmic}
\newtheorem{asum}{Assumption}
\newcommand*{\circled}[1]{\lower.7ex\hbox{\tikz\draw (0pt, 0pt)%
		circle (.42em) node {\makebox[1em][c]{\small #1}};}}
\usepackage{textcomp}

\begin{document}

\begin{frontmatter}

\title{Gradient-extrapolation-based distributed mirror descent algorithm for multi-cluster aggregative games} 

\thanks[footnoteinfo]{This work was supported by the National Natural Science Foundation of China under Grant 62103203. (Corresponding author: Fuyong Wang.)}

\author[Paestum]{Rui Zhu}\ead{rui$\_$zcg@mail.nankai.edu.cn},    
\author[Paestum]{Fuyong Wang}\ead{wangfy@nankai.edu.cn},               
\author[Paestum]{Zhongxin Liu}\ead{lzhx@nankai.edu.cn},  
\author[Paestum]{Zengqiang Chen}\ead{chenzq@nankai.edu.cn}

\address[Paestum]{College of Artificial Intelligence, Nankai University, Tianjin 300350, China}  

\begin{keyword}                           
Multi-cluster aggregative games; Distributed mirror descent; Gradient extrapolation; Convergence.              
\end{keyword}                             

\begin{abstract}                          
This paper studies a class of multi-cluster aggregative games characterized by the coexistence of cooperation and competition, where each agent's cost function depends on its own strategy and the aggregate of all agents' strategies. To address the Nash equilibrium seeking problem for such games in the non-Euclidean setting, a distributed mirror descent algorithm with gradient extrapolation is proposed over time-varying intra-cluster and inter-cluster networks. The mirror descent framework employs a general Bregman divergence as the distance measure, providing greater flexibility than Euclidean-based methods, while gradient extrapolation exploits historical gradient information to improve convergence performance. Under the restricted strong monotonicity characterized by the Bregman divergence, the convergence of the proposed algorithm is established, and it achieves the $\mathcal{O}(1/k)$ convergence rate with the appropriately selected step-size and parameters. Finally, the effectiveness of the proposed algorithm is verified by an example on the demand response of energy systems.
\end{abstract}

\end{frontmatter}

\thispagestyle{empty}

\section{Introduction}
\label{sec:introduction}
Aggregative games, as an important subclass of non-cooperative games, have been extensively utilized to address optimal decision-making problems in multi-agent systems, such as the demand response management \citep{Ye17}, charging coordination in electric vehicles \citep{Zhou25}, and flow control in traffic networks \citep{Bar15}, among others. Distinct from general non-cooperative games, the cost of each agent in such games is not directly affected by other agents' strategies but rather depends on the aggregate of all agents' strategies and its own strategy. Nevertheless, Nash equilibrium (NE) is still a widely-used concept in aggregative games, representing a stable and desirable solution where no agent can improve its payoff through unilateral deviation. Recently, the NE seeking algorithm design in aggregative games has attracted considerable attention \citep{Par20,Deng22,Huang23,Li25}. However, since aggregative games primarily consider agents in competition, they are inadequate for addressing real-life scenarios where competition and cooperation coexist, which drives the development of multi-cluster aggregative games (MAGs).

In MAGs, agents in the same cluster collaborate to optimize the sum of costs, while each cluster is viewed as a virtual self-interested player to optimize its own cost. Additionally, the cost of each agent is influenced by its own strategy and the aggregate of all agents' strategies in the game, regardless of their cluster affiliation. When each cluster contains only one agent, the MAG reduces to a standard aggregative game \citep{Kos16}. On the other hand, when there is just one cluster, it reduces to an aggregative optimization problem \citep{Li22,Chen23}. It is worth noting that existing studies on the coexistence of cooperation and competition mainly focus on multi-cluster games \citep{Meng23,Ngu24,Lif24,Huang25,Liu25}, a class of games obtained by incorporating cooperative mechanisms into general non-cooperative games. Although these methods can be applied to MAGs, they generally require each agent to estimate the strategies of other agents and then compute the aggregate, which inevitably increases computational and communication burdens. In fact, for MAGs, it is sufficient to obtain only the aggregate, without access to the specific strategies of each agent. Specifically, an algorithm based on best response is proposed for MAGs in \citet{Keb21}, where a cluster coordinator is used to gather information within its own cluster and exchange local aggregate information with the neighboring coordinators. A similar communication architecture is adopted in \citet{Chen24} to develop a gradient-based algorithm. Further, fully distributed continuous-time and discrete-time NE seeking algorithms are developed in \citet{Huang24} and \citet{Zhao26}, respectively, where each agent estimates the aggregate and gradient information required for NE seeking using only information exchanged with its neighbors over intra-cluster and inter-cluster networks.

In contrast to existing gradient-based algorithms for MAGs in the Euclidean setting, this paper adopts the mirror descent method, a generalization of gradient descent, to develop a distributed algorithm for MAGs under a flexible non-Euclidean framework induced by the Bregman divergence. This framework provides greater flexibility in selecting a distance-generating function suited to the geometry of the constraint set, potentially leading to more efficient strategy updates. It has already been widely applied to optimization problems. For example, \citet{Doan19} analyzes centralized and distributed mirror descent algorithms. A distributed zeroth-order mirror descent method is proposed in \citet{Yu22}. To reduce communication costs, an event-triggered strategy is employed to design the distributed mirror descent algorithm in \citet{Xiong23}. More recently, \citet{Wang24} extends mirror descent to NE seeking in stochastic aggregative games. To the best of our knowledge, there is limited literature exploring how to apply the mirror descent method to handle MAGs where cooperation and competition coexist. Moreover, this problem is not simply a combination of the mirror descent algorithms from distributed optimization or game problems. Instead, several challenges arise from coordinating information exchange in the primal space with gradient updates in the dual space and handling the asymmetry of the Bregman divergence in the convergence analysis. Therefore, developing a mirror descent algorithm for MAGs is of great significance from both theoretical and practical perspectives.

In time-varying networks, sparse communication links may slow information transmission, thereby affecting the convergence performance of algorithms. To alleviate this effect, acceleration techniques are considered an effective solution. In this regard, the Nesterov acceleration method has been widely applied to address distributed optimization \citep{Kri15,Liu24} and game problems \citep{Tat21}, and has also been extended to more complex scenarios where cooperation and competition coexist \citep{Wang25}. However, the assumptions on the pseudo-gradient generally need to hold over $\mathbb{R}^{md}$, since the accelerated search points may lie outside the feasible region. To avoid this limitation, the gradient extrapolation provides an alternative acceleration mechanism by performing extrapolation in the dual space rather than in the primal space, as illustrated in Fig.~\ref{fig_1}. Accordingly, gradient-extrapolation-based methods are developed for variational inequality problems \citep{Kot22}, optimization problems \citep{Lan18}, and game problems \citep{Wang24}, demonstrating improved convergence performance. Nevertheless, extending this acceleration mechanism to MAGs remains challenging due to the coexistence of cooperation and competition.

\begin{figure}[!htp]
	\centering
	\includegraphics[width=3in]{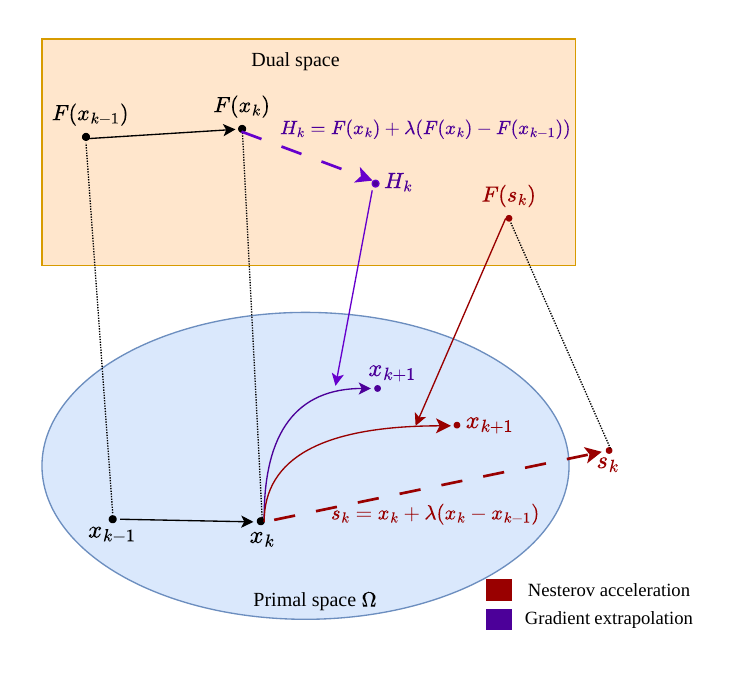}
	\caption{Comparison of acceleration mechanisms between Nesterov acceleration method and gradient extrapolation method.}
	\label{fig_1}
\end{figure}

Inspired by the above discussions, this paper aims to develop a gradient-extrapolation-based distributed mirror descent algorithm for MAGs, capable of addressing complex scenarios where cooperation and competition coexist and achieving efficient NE seeking The main contributions are presented as follows:
\begin{itemize}
	\item A distributed mirror descent algorithm incorporating gradient extrapolation is proposed for efficient NE seeking in MAGs. In contrast to existing algorithms for MAGs \citep{Chen24,Huang24,Zhao26} that rely on the classical Euclidean distance, the proposed algorithm employs the more general Bregman divergence within the mirror descent framework, thereby offering greater flexibility. Moreover, the gradient extrapolation method leverages historical gradient information in the decision-making process to improve the convergence performance.
	\item Compared with existing methods, as summarized in Table~\ref{tab1}, the proposed mirror descent algorithm achieves convergence over time-varying intra-cluster and inter-cluster networks under the restricted strong monotonicity characterized by the Bregman divergence. Furthermore, the convergence rate of $\mathcal{O}(1/k)$ is established through an appropriate design of the step sizes and algorithm parameters.
\end{itemize}

\begin{table*}[t]  
	\centering
	\caption{Recent studies on multi-cluster aggregative games.}
	\label{tab1}
	\begin{threeparttable} 
	\begin{tabular}{cccccc}  
		\hline
	     Literature & Network & \makecell[c]{Aggregate acquisition\\scheme} & Algorithm &  \makecell[c]{Strong monotonicity\\condition}   & Rate\\
	     \hline
		 \citet{Keb21} & Time-varying & Semi-decentralized &  Best-response & $-$  & $-$\\
		 \citet{Chen24} & Fixed & Semi-decentralized & Gradient &   Euclidean-based  & Linear\\
		 \citet{Huang24} & Fixed & Distributed & Gradient & Euclidean-based  & $-$\\
		 \citet{Zhao26} & Fixed & Distributed & Gradient & Euclidean-based & Linear\\
		 This paper  & Time-varying & Distributed & Mirror descent & Bregman-based restricted  & $\mathcal{O}(1/k)$\\
		 \hline
	\end{tabular}
	\begin{tablenotes}
		\item[] \small A dash ($-$) indicates that the corresponding property is not considered in the literature.
	\end{tablenotes}
	\end{threeparttable}
\end{table*}

\emph{Notations}: Denote $\lceil h \rceil:=\{1,2,\ldots,h\}$ for an integer $h\geq1$. All vectors are regarded as column vectors unless otherwise specified. The symbol $col((x_{j})_{j\in \mathcal{M}})$ is the stacked vector in the form of $[(x_{1})^{\top},\cdots,(x_{m})^{\top}]^{\top}$. For a vector $b$, $[b]_{i}$ denotes its $i$th element, and $b^{\top}$ represents its transpose. $\langle a,b\rangle=a^{\top}b$ represents the inner product of vectors $a$ and $b$. The Euclidean norm of vector $a$ is denoted as $\lVert a \rVert=\sqrt{a^{\top}a}$. Also, the dual norm with respect to $\lVert \cdot \rVert$ is given by $\lVert z \rVert_{*}=\max_{\lVert a\rVert=1}\langle z,a\rangle$. Define $\mathbf{1}_{m}$ as the $m$-dimensional vector with all entries equal to $1$, and $\mathbf{I}$ as the identity matrix.

\section{Problem Formulation}
\label{sec:PF}

\subsection{Multi-cluster Aggregative Games}
A MAG with an agent set $\mathcal{M}\triangleq\{1,2,\cdots,m\}$ consists of $h$ clusters. Every cluster $l\in \lceil h \rceil$, viewed as a virtual player, has an agent set $\mathcal{M}^{l}$ containing $m^{l}$ agents, where $m=\sum_{l=1}^{h}m^{l}$ and agents across different clusters are distinct from one another, i.e. $\mathcal{M}^{l}\cap\mathcal{M}^{l'}=\emptyset$ for any $l\neq l'\in \lceil h \rceil$. For notational clarity, agent indices are represented by subscripts, whereas cluster indices are indicated through superscripts.

Every agent $j\in \mathcal{M}^{l}$ in any cluster $l \in \lceil h \rceil$ has its own cost function  which is privately accessible only to itself. This function $f_{j}^{l}(x_{j}^{l},\psi(\mathbf{x}))$ depends on the agent's own strategy $x^{l}_{j}$ chosen from a strategy set $\Omega_{j}^{l}\subset \mathbb{R}^d$, and the aggregate $\psi$ of all agents' strategies $\mathbf{x}\triangleq col(((x_{j}^{l})_{j\in\mathcal{M}^{l}})_{l\in\lceil h \rceil})\in \mathbb{R}^{md}$. Specifically, the aggregate $\psi(\mathbf{x})$ is defined as $\psi(\mathbf{x})\triangleq\sum_{l=1}^{h}\sum_{j=1}^{m^{l}}x_{j}^{l}\in \mathbb{R}^{d}$. In MAGs, all agents within any given cluster $l\in \lceil h \rceil$ collaborate to minimize the cluster's cost function $f^{l}(x^{l},\psi(\mathbf{x}))$ formed by the sum of these agents' cost functions, as expressed by
\begin{equation}
	\mathop{\min}_{x^{l}\in \Omega^{l}}f^{l}(x^{l},\psi(\mathbf{x}))=\sum_{j=1}^{m^{l}}f_{j}^{l}(x_{j}^{l},\psi(\mathbf{x})),
	\label{Eq:game}
\end{equation}
where $x^{l}=col((x_{j}^{l})_{j\in \mathcal{M}^{l}})\in \Omega^{l}$, and $\Omega^{l}\triangleq\prod_{i=1}^{m^{l}}\Omega^{l}_{i}\subset \mathbb{R}^{dm^{l}}$. It is worth noting that problem \eqref{Eq:game} includes two special cases. Specifically, when $h=1$, it reduces to an aggregative optimization problem. When $m^{l}=1$ for all $l\in \lceil h\rceil$ and $h>1$, it reduces to an aggregative game.

The Nash equilibrium (NE) $\mathbf{x}^{*}$ is a stable strategy profile for the MAG satisfying, for any cluster $l\in\lceil h\rceil$,
\begin{equation*}
	f^{l}(x^{l,*},\psi(\mathbf{x}^{*}))\leq f^{l}(x^{l},\psi(x^{l})+\psi(\mathbf{x}^{-l,*})), ~~\forall x^{l}\in\Omega^{l}, 
\end{equation*}
where $\psi(x^{l})\triangleq\sum_{j\in\mathcal{M}^{l}}x_{j}^{l}$, and $\psi(\mathbf{x}^{-l,*})\triangleq\sum_{i\in\lceil h \rceil\backslash\{l\}}$ $\sum_{j\in \mathcal{M}^{i}}x_{j}^{i,*}$ is the aggregate of the strategies of all agents outside cluster $l$. In other words, no cluster can reduce its cost by unilaterally deviating from this profile $\mathbf{x}^{*}$.

The following assumptions on the cost functions are introduced for the subsequent analysis.
\begin{asum}\label{asum1}
For each agent $j$ in any cluster $l\in \lceil h \rceil$, the strategy set $\Omega_{j}^{l}$ is convex, compact and closed. Moreover, the cost function $f_{j}^{l}(x_{j}^{l},\psi(\mathbf{x}))$ is continuously differentiable with respect to $(x_{j}^{l},\psi(\mathbf{x}))\in \Omega_{j}^{l}\times\mathbb{R}^d$ and is convex over $\Omega^{l}$ for any given $\mathbf{x^{-l}}\in \Omega^{-l}\triangleq\prod_{r\neq l}\Omega^{r}$.
\end{asum}	

Based on Assumption \ref{asum1}, it follows that $f^{l}(x^{l},\psi(\mathbf{x}))$ is convex over $\Omega^{l}$ and continuously differentiable in $(x^{l},\psi(\mathbf{x}))\in \Omega^{l}\times\mathbb{R}^d$. For each agent $j$ in cluster $l$,
let $\nabla_{1}f_{j}^{l}(x_{j}^{l},z)$ and $\nabla_{2}f_{j}^{l}(x_{j}^{l},z)$ denote the gradients of
$f_{j}^{l}(x_{j}^{l},z)$ with respect to its first argument
$x_{j}^{l}$ and second argument $z$, respectively. Since, among the local cost functions in cluster $l$, $x_{j}^{l}$ appears explicitly only in the first argument of $f_{j}^{l}$, the direct derivative with respect to $x_{j}^{l}$ involves only $\nabla_{1}f_{j}^{l}$, whereas its indirect effect on the cluster cost through the aggregate is captured by $\nabla_{2}f^{l}=\sum_{i=1}^{m^{l}}\nabla_{2}f_{i}^{l}$. Accordingly, for any cluster $l\in\lceil h\rceil$, the gradient of $f^{l}(x^{l},\psi(\mathbf{x}))$ with respect to $x_{j}^{l}$ is defined as 
\begin{equation*}	
	F_{j}^{l}(\mathbf{x})\triangleq\nabla_{1}f_{j}^{l}(x_{j}^{l},\psi(\mathbf{x}))+\nabla_{2}f^{l}(x^{l},\psi(\mathbf{x})).
\end{equation*}
Denote the pseudo gradient as $F(\mathbf{x})\triangleq col(((F_{j}^{l}(\mathbf{x}))_{j\in\mathcal{M}^{l}})_{l\in\lceil h\rceil})$. Let $\bar{z}_{j}^{l}$ denote the estimate of the aggregate by agent $j$ in cluster $l$ and $\bar{z}^{l}\triangleq col((\bar{z}_{j}^{l})_{j\in \mathcal{M}^{l}})$. Under the estimated aggregates, define
\begin{equation*}	
	G_{j}^{l}(x^{l},\bar{z}^{l})\triangleq\nabla_{1}f_{j}^{l}(x_{j}^{l},\bar{z}_{j}^{l})+\sum_{i=1}^{m^l}\nabla_{2}f^{l}_i(x_i^{l},\bar{z}^{l}_i).
\end{equation*}	
It is easily obtained that $F_{j}^{l}(\mathbf{x})=G_{j}^{l}(x^{l},\mathbf{1}_{m^{l}}\otimes\mathbf{I}_{d}\psi(\mathbf{x}))$. The following Lipschitz continuity conditions are imposed on these mappings.
\begin{asum}\label{asum2}
$\textup{(i)}$ $F(\mathbf{x})$ is $L$-Lipschitz continuous for any $\mathbf{x}_{1}, \mathbf{x}_{2}\in \Omega\triangleq\prod_{l\in\lceil h\rceil}\Omega^{l}$, i.e., $	
	\lVert F(\mathbf{x}_{1})-F(\mathbf{x}_{2})\rVert_{*}\leq L\lVert\mathbf{x}_{1}-\mathbf{x}_{2}\rVert$.\\
$\textup{(ii)}$ For each agent $j$ in cluster $l\in\lceil h\rceil$, $G_{j}^{l}(x^{l},\bar{z}^{l})$ is $\bar{L}$-Lipschitz continuous in $\bar{z}^{l}$ for any fixed $x^{l}\in \Omega^{l}$, i.e., for any $\bar{z}_{1}^{l},\bar{z}_{2}^{l}\in \mathbb{R}^{m^{l}d}$, $\lVert G_{j}^{l}(x^{l},\bar{z}_{1}^{l})\!-\!G_{j}^{l}(x^{l},\bar{z}_{2}^{l})\rVert_{*}\!\leq\! \bar{L}\lVert\bar{z}_{1}^{l}\!-\!\bar{z}_{2}^{l}\rVert$.\\
$\textup{(iii)}$ For each agent $j$ in cluster $l\in\lceil h\rceil$, $\nabla_{2}f_{j}^{l}(x_{j}^{l},\bar{z}_{j}^{l})$ is $\hat{L}$-Lipschitz continuous over $\Omega_{j}^{l}\times\mathbb{R}^{d}$, i.e., $\lVert \nabla_{2}f_{j}^{l}(x_{j1}^{l},\bar{z}_{j1}^{l})-\nabla_{2}f_{j}^{l}(x_{j2}^{l},\bar{z}_{j2}^{l})\rVert_{*}\leq \hat{L}(\lVert x_{j1}^{l}-x_{j2}^{l}\rVert+\lVert\bar{z}_{j1}^{l}-\bar{z}_{j2}^{l}\rVert)$ for any $x_{j1}^{l},x_{j2}^{l}\in\Omega_{j}^{l}$ and $\bar{z}_{j1}^{l},\bar{z}_{j2}^{l}\in\mathbb{R}^{d}$.
\end{asum}	
\begin{rem}
It is worth noting that Assumption \ref{asum1} ensures the existence of NE for MAGs \citep{Scu10}, and under Assumption \ref{asum1}, the NE $\mathbf{x}^{*}$ can be equivalently characterized as the solution to the variational inequality problem $\mathrm{VI}(\Omega,F)$ i.e., $(\mathbf{x}'\!-\!\mathbf{x}^{*})^{\top}F(\mathbf{x}^{*})\!\geq\!0$ for any $\mathbf{x}'\in \Omega$.
\end{rem}

In the designed non-Euclidean algorithms, the Bregman divergence is specified as follows.
\begin{defn}\label{defn1}
Let $\phi(\cdot)$ be a $1$-strongly convex differentiable function with respect to the norm $\lVert\cdot\rVert$, i.e., $\phi(x_{2})\geq \phi(x_{1})+\langle\nabla \phi(x_{1}),x_{2}-x_{1}\rangle+\frac{1}{2}\lVert x_{1}-x_{2}\rVert^2$ for any $x_{1},x_{2}\in\bar{\Omega}$, where $\bar{\Omega}$ is the constraint set. The Bregman divergence $D_{\phi}(x_{1},x_{2})$ associated with $\phi$ is given by
\begin{equation}
	D_{\phi}(x_{1},x_{2})=\phi(x_{2})-\phi(x_{1})-\langle\nabla \phi(x_{1}),x_{2}-x_{1}\rangle.
	\label{eq:breg}
\end{equation}
\end{defn}

From Definition \ref{defn1}, the following relationship between the Bregman divergence and the Euclidean distance is established:
\begin{equation}
	D_{\phi}(x_{1},x_{2})\geq\frac{1}{2}\lVert x_{1}-x_{2}\rVert^2,~~\forall x_{1},x_{2}\in\bar{\Omega}.
	\label{eq:relat}
\end{equation}

\begin{rem}
The function $\phi(\cdot)$  is assumed to be $1$-strongly convex for the sake of presenting the main results more clearly. If the $\vartheta$-strongly convex assumption is adopted, the convergence analysis remains essentially unchanged.
\end{rem}

In existing studies \citep{Chen24,Pu21,Ngu23,Zhao26}, strong monotonicity is commonly formulated in terms of the Euclidean distance. In contrast, this paper formulates the condition using the Bregman divergence, making it better aligned with the non-Euclidean geometry underlying mirror descent. Moreover, the condition is imposed in a restricted form relative to the NE $\mathbf{x}^{*}$. Specifically, for any $\mathbf{x}_{1},\mathbf{x}_{2}\in\Omega$, define $\bar{D}_{\phi}(\mathbf{x}_{1},\mathbf{x}_{2})=\sum_{l\in\lceil h\rceil}\sum_{j\in\mathcal{M}^{l}}D_{\phi}(x^{l}_{j,1},x^{l}_{j,2})$. This assumption is then stated as follows.
\begin{asum}\label{asum3}
The mapping $F(\mathbf{x})$	is restricted strongly monotone with respect to the NE $\mathbf{x}^{*}$ with a constant $\mu>0$, that is, for any $\mathbf{x}\in\Omega$,
$\langle F(\mathbf{x})-F(\mathbf{x}^{*}),\mathbf{x}-\mathbf{x}^{*}\rangle\geq2\mu\bar{D}_{\phi}(\mathbf{x},\mathbf{x}^{*})$.
\end{asum}

\subsection{Communication Networks}

For MAGs, two distinct types of networks are involved in the design of the fully distributed algorithm: local intra-cluster networks for tracking the cluster gradient and the global inter-cluster networks for tracking the aggregate. Specifically, for the intra-cluster network $\mathcal{G}^{l}_{k}=\{\mathcal{M}^{l},\mathcal{E}^{l}_{k}\}$, where $\mathcal{E}^{l}_{k}\subseteq\mathcal{M}^{l}\times\mathcal{M}^{l}$ is the edge set, $(i,j)\in\mathcal{E}^{l}_{k}$ indicates that agent $j$ receives information from agent $i$ within cluster $l$ at time $k$. Correspondingly, the adjacency matrix $C_{k}^{l}\in\mathbb{R}^{m^{l}\times m^{l}}$ satisfies $[C_{k}^{l}]_{ji}>0$ if $(i,j)\in\mathcal{E}^{l}_{k}$ and $[C_{k}^{l}]_{ji}=0$ otherwise. Similarly, for the inter-cluster network $\mathcal{G}^{0}_{k}=\{\mathcal{M}^{0},\mathcal{E}^{0}_{k}\}$ with $\mathcal{M}^{0}=\mathcal{M}$ and $\mathcal{E}^{0}_{k}\subseteq\mathcal{M}\times\mathcal{M}$, the adjacency matrix $C^{0}_{k}\in \mathbb{R}^{m\times m}$ satisfies $[C_{k}^{0}]_{ji}^{lu}>0$ if $(i,j)\in\mathcal{E}^{0}_{k}$ and $[C_{k}^{0}]_{ji}^{lu}=0$ otherwise, where $l$ and $u$ denote the clusters to which agents $j$ and $i$ belong, respectively. 

\begin{asum}\label{asum4}
There exists a positive integer $Q$ such that, for each $i\in\{0\}\cup\lceil h\rceil$ and any $p\geq0$, the union graph $\{\mathcal{M}^{i},\cup_{k=pQ}^{(p+1)Q-1}\mathcal{E}^{i}_{k}\}$ is strongly connected. Each adjacency matrix $C^{i}_{k}$ is doubly stochastic for any $k\geq0$, i.e.,
$C^{0}_{k}\mathbf{1}_{m}=\mathbf{1}_{m}, \mathbf{1}_{m}^{\top}C^{0}_{k}=\mathbf{1}_{m}^{\top},C^{l}_{k}\mathbf{1}_{m^{l}}=\mathbf{1}_{m^{l}}, \mathbf{1}_{m^{l}}^{\top}C^{l}_{k}=\mathbf{1}_{m^{l}}^{\top},l\in\lceil h\rceil$.
Moreover, for any $k\geq0$, each $C^{i}_{k}$ has positive diagonal elements, and there exists a constant $0<\gamma<1$ such that $[C_{k}^{l}]_{js}\geq\gamma$ for all $s\in\mathcal{N}^{l}_{k,j}$ and $l\in\lceil h\rceil$, and $[C_{k}^{0}]_{js}^{lu}\geq\gamma$ for all $s\in\mathcal{N}^{0}_{k,j}$, where $\mathcal{N}^{i}_{k,j}\triangleq\{s\in \mathcal{M}^{i}\mid(s,j)\in\mathcal{E}^{i}_{k}\}$. 
\end{asum}	

Unlike the assumptions in \citet{Chen24,Huang24,Zhao26}, Assumption~\ref{asum4} allows the communication digraphs to be disconnected at some time, thereby covering a broader class of communication scenarios. For any $k\geq t\geq0$, let $\Xi^{i}_{(k,t)}=C^{i}_{k}C^{i}_{k-1}\cdots C^{i}_{t}$. The following property is then stated to facilitate the subsequent analysis.
\begin{lem}\citep{Ned09}\label{lem1}
Under Assumption \ref{asum4}, there exist $\beta^{0}\triangleq(1-\gamma/(4m^{2}))^{-2}>0$, $0<\eta^{0}\triangleq(1-\gamma/(4m^{2}))^{1/Q}<1$, and for each $l\in\lceil h\rceil$, $\beta^{l}\triangleq(1-\gamma/(4(m^l)^{2}))^{-2}>0$, $0<\eta^{l}\triangleq(1-\gamma/(4(m^l)^{2}))^{1/Q}<1$, such that for any $k\geq t\geq0$,
\begin{equation*}
\begin{aligned}
	&\mid [\Xi^{0}_{(k,t)}]_{js}-1/m\mid\leq \beta\eta^{k-t}, ~~\forall j,s\in\mathcal{M}, \nonumber\\
	&\mid [\Xi^{l}_{(k,t)}]_{j's'}-1/m^{l}\mid\leq \beta\eta^{k-t}, ~~\forall j',s'\in\mathcal{M}^{l},
\end{aligned}
\end{equation*}
where $\beta=\max_{i\in\{0,1,\cdots,h\}}\beta^{i}$ and $\eta=\max_{i\in\{0,1,\cdots,h\}}\eta^{i}$.
\end{lem}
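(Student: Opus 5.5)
This is the classical geometric-mixing bound of \citet{Ned09}. I would rederive it for each network $i\in\{0\}\cup\lceil h\rceil$ separately and then take the worst constants. The argument is the standard one from the cited work.

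\textbf{Step 1: fixed network, within one window.} Fix $i$ and write $n$ for the size of $\mathcal{M}^{i}$, so $n=m$ or $n=m^{l}$. Consider any window of length $Q$ that starts at a block boundary $pQ$. By Assumption~\ref{asum4}, the union graph over the window is strongly connected, every diagonal entry is positive, and every nonzero entry is at least $\gamma$. I would use these three facts to show that the block product $\Xi^{i}_{((p+1)Q-1,pQ)}$ contracts disagreement.

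Rather than proving that all entries of the block product are positive, I would follow \citet{Ned09} and work with a quadratic Lyapunov function. Let $V(x)=\sum_j (x_j-\bar x)^2$, where $\bar x$ is the average, which is preserved because each $C^i_k$ is doubly stochastic. Double stochasticity gives the exact decrease
\begin{equation*}
V(C x)=V(x)-\sum_{j<s}[C^\top C]_{js}(x_j-x_s)^2 .
\end{equation*}
Positive diagonals together with the lower bound $\gamma$ ensure that $[C^\top C]_{js}\ge\gamma^2$, or at least $\gamma$ under the Metropolis-type argument of \citet{Ned09}, on every edge. Strong connectivity of the union graph means that, over one window, some edge must cross each cut ordering of the sorted values. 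This yields
\begin{equation*}
V\bigl(\Xi^{i}_{((p+1)Q-1,pQ)}x\bigr)\le\Bigl(1-\frac{\gamma}{2n^2}\Bigr)V(x).
\end{equation*}
I expect this step to be the main obstacle. It requires the sorting and cut argument to produce the factor $1/n^2$ and to track exactly how $\gamma$ enters.

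\textbf{Step 2: from windows to arbitrary $(k,t)$.} For general $k\ge t$, the interval $[t,k]$ contains at least $\lfloor (k-t+1)/Q\rfloor-1$ complete windows. On the partial windows at the two ends, I would use that a doubly stochastic matrix never increases $V$. Applying the bound of Step 1 to $x=e_s-\mathbf{1}/n$ gives
\begin{equation*}
\bigl|[\Xi^i_{(k,t)}]_{js}-1/n\bigr|\le \sqrt{V}\le\Bigl(1-\frac{\gamma}{2n^2}\Bigr)^{\frac{1}{2}\left(\frac{k-t+1}{Q}-2\right)}.
\end{equation*}
Next I would use $(1-a)^{1/2}\le 1-a/2$ with $a=\gamma/(2n^2)$, which gives base $1-\gamma/(4n^2)$. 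This yields the bound $\bigl(1-\gamma/(4n^2)\bigr)^{-2}\bigl((1-\gamma/(4n^2))^{1/Q}\bigr)^{k-t}$, which is $\beta^i(\eta^i)^{k-t}$. I would absorb the harmless $+1$ in the exponent into the constant.

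\textbf{Step 3: uniform constants.} Since $\beta^i\le\beta$ and $\eta^i\le\eta$ for every $i$, and since $\eta<1$ because each $\eta^i<1$ and there are finitely many networks, the bound $\beta\eta^{k-t}$ holds simultaneously for the inter-cluster network with $n=m$ and for every intra-cluster network with $n=m^l$. This is exactly the statement of Lemma~\ref{lem1}.
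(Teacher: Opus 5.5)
\textbf{Gap in Step 1: your contraction factor $1-\gamma/(2n^2)$ is asserted, not derived.} The paper gives no proof of its own; it cites \citet{Ned09}. Your reduction to a per-window contraction of $V$, followed by the test vector $e_s-\mathbf{1}/n$ and $\sqrt{1-a}\le 1-a/2$, is the standard derivation behind that bound. Steps 2 and 3 are essentially fine, up to one ordering detail. Apply $(1-a)^{1/2}\le 1-a/2$ to the integer number $W\ge 0$ of complete windows first, and only then use $W\ge (k-t+1)/Q-2$. For $k-t+1<2Q$ that real exponent is negative and the inequality reverses (in that range the claim is trivial, since the left side is at most $1$ while $\beta^i(\eta^i)^{k-t}>1$).

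Step 1 is where the constants $\beta^i,\eta^i$ are decided, and neither mechanism you offer works:
\begin{itemize}
\item The pairwise bound $[C^\top C]_{js}\ge C_{jj}C_{js}\ge\gamma^2$ is valid. But it only yields a per-window factor of order $1-\gamma^2/n^2$. That is weaker than $1-\gamma/(2n^2)$ whenever $\gamma<1/2$, and $\gamma\le 1/2$ is forced as soon as a row has two nonzero entries. You would prove a geometric bound, but not the stated one.
\item Your fallback ``at least $\gamma$ on every edge'' is false. For the doubly stochastic matrix with $C_{11}=C_{12}=C_{22}=C_{23}=C_{33}=C_{31}=1/2$ one has $\gamma=1/2$ but $[C^\top C]_{12}=1/4$.
\end{itemize}

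\textbf{First missing ingredient: row-wise accounting.} Linear dependence on $\gamma$ comes from bounding by rows, not by pairs. Since rows and columns of $C$ sum to one,
$V(x)-V(Cx)=\sum_j\sum_i C_{ji}\bigl(x_i-[Cx]_j\bigr)^2$.
This is a sum of variances of $x$ under the probability vectors given by the rows. Every positive entry of row $j$ is at least $\gamma$; this includes the diagonal, because a positive diagonal entry means $j\in\mathcal{N}_{k,j}$ in Assumption~\ref{asum4}. Let $S_j$ be the support of row $j$. Keeping only the maximizing and minimizing indices gives a variance of at least $\frac{\gamma}{2}\bigl(\max_{S_j}x-\min_{S_j}x\bigr)^2$.

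\textbf{Second missing ingredient: charging cuts while values move.} Your phrase ``some edge must cross each cut'' leaves open how cuts are charged when the values change during the window.
\begin{itemize}
\item Sort the values at the window start $pQ$ and set $g_d=x_{(d+1)}(pQ)-x_{(d)}(pQ)$.
\item Let $\tau_d$ be the first time in the window at which an edge crosses cut $d$, in either direction. Strong connectivity of the union graph guarantees $\tau_d$ exists.
\item Before $\tau_d$ every $C_\tau$ is block diagonal with respect to that cut. Hence the lower nodes stay at or below $x_{(d)}(pQ)$ and the upper ones at or above $x_{(d+1)}(pQ)$.
\item At $\tau_d$ the receiving node's row contains both itself and the sender, which lie on opposite sides of the cut. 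So that row's range covers $[x_{(d)}(pQ),x_{(d+1)}(pQ)]$.
\item Assign each cut to one such row. A row's assigned gaps have disjoint interiors, so its squared range is at least the sum of their $g_d^2$.
\item Summing over the window, the total decrease is at least $\frac{\gamma}{2}\sum_d g_d^2\ge\frac{\gamma}{2n(n-1)}V(x(pQ))$. The last step uses $V\le n(x_{(n)}-x_{(1)})^2\le n(n-1)\sum_d g_d^2$.
\end{itemize}
With these two ingredients your claimed factor $1-\gamma/(2n^2)$ holds, and the rest of your argument yields exactly $\beta^i(\eta^i)^{k-t}$.
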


\section{Algorithm Design}
\label{sec:AD}

In this section, a distributed mirror descent algorithm with gradient extrapolation is proposed to seek the NE of MAGs. Compared with the settings in \citet{Lan18} and \citet{Kot22}, an important challenge in the considered setting is that each agent can only access local information exchanged through intra-cluster and inter-cluster networks. Therefore, each agent $j$ in cluster $l\in\lceil h\rceil$ maintains a local variable $x_{j,k}^{l}\in \Omega_{j}^{l}$ as an estimate of its equilibrium strategy and an estimate $v_{j,k}^l$ to track the aggregate $\psi(\mathbf{x}_{k})$. Moreover, an auxiliary variable $y_{j,k}^{l}$ is introduced to estimate the derivative with respect to the second argument, since this term involves information from all agents within cluster $l$, whereas the derivative of $f_j^l$ with respect to the first argument can be computed locally. Based on these variables, the update procedure of the proposed algorithm is given in Algorithm~\ref{alg:alg1}.

\begin{algorithm}[!htb]
	\caption{Gradient-extrapolation-based distributed mirror descent algorithm}\label{alg:alg1}
		Define the step size $\alpha_{k}>0$ and parameter $\lambda_{k}>0$ at time $k$.\\
		{\textbf{Initialization:}} For any agent $j$ of any cluster $l\in\lceil h \rceil$, initialize with $x_{j,0}^{l}=x_{j,-1}^{l}\in \Omega_{j}^{l}$, and let $v_{j,0}^{l}=v_{j,-1}^{l}=x_{j,0}^{l}$, $y_{j,-1}^{l}=y_{j,0}^{l}=\nabla_{2}f_{j}^{l}(x_{j,0}^{l},mv_{j,0}^{l})$.\\
		{\textbf{Iteration:}} For $k\geq0$, each agent $j$ in cluster $l\in\lceil h\rceil$ executes the following update:
		\begin{subequations}
			\begin{align}
			&x_{j,k+1}^{l}\!=\!\mathop{\operatorname{argmin}}\limits_{\bar{x}_{j}^{l}\in\Omega_{j}^{l}}\Big\{
			\alpha_{k}\langle H_{j,k}^{l} ,\bar{x}_{j}^{l}\rangle+D_{\phi}(x_{j,k}^{l},\bar{x}_{j}^{l})\Big\},\label{Eq:x}\\ 
			&v_{j,k+1}^l=\sum_{i\in \mathcal{N}^{0}_{k,j}} [C_{k}^{0}]_{ji}^{lu} v_{i,k}^{u}+x_{j,k+1}^{l}-x_{j,k}^{l},\label{Eq:v}\\ 
			&y_{j,k+1}^{l}=\sum_{i=1}^{m^{l}}[C_{k}^{l}]_{ji}y_{i,k}^{l}+\nabla_{2}f_{j}^{l}(x_{j,k+1}^{l},mv_{j,k+1}^{l})\nonumber\\
			&~~~~~~~~~~~~~~~~~~-\nabla_{2}f_{j}^{l}(x_{j,k}^{l},mv_{j,k}^{l}),\label{Eq:y}
\end{align}
\end{subequations}
where $H_{j,k}^{l}=(1+\lambda_k)(\nabla_{1}f_{j}^{l}(x_{j,k}^{l},mv_{j,k}^{l})+m^{l}y_{j,k}^{l})-\lambda_k$ $ (\nabla_{1}f_{j}^{l}(x_{j,k-1}^{l},mv_{j,k-1}^{l})+m^{l}y_{j,k-1}^{l})$.
\end{algorithm}

At each iteration, each agent $j$ in cluster $l$ updates its strategy $x_{j,k+1}^{l}$ by employing the mirror descent scheme with gradient extrapolation. Different from the Nesterov acceleration method, where extrapolation is performed in the primal space, the gradient extrapolation used here is carried out in the dual space. Specifically, the gradient extrapolation term
$\lambda_{k}([\nabla_{1}f_{j}^{l}(x_{j,k}^{l},mv_{j,k}^{l})+m^{l}y_{j,k}^{l}]-[\nabla_{1}f_{j}^{l}(x_{j,k-1}^{l},mv_{j,k-1}^{l})+m^{l}y_{j,k-1}^{l}])$ 
is incorporated into the strategy update to improve the convergence performance. Since the aggregate $\psi(\mathbf{x}_{k})$ depends on the strategies of all agents in MAGs, the aggregate estimates are exchanged among neighboring agents through the inter-cluster network $\mathcal{G}^{0}_{k}$, regardless of their cluster affiliations. Based on the received information and the strategy increment $x_{j,k+1}^{l}-x_{j,k}^{l}$, each agent $j$ in cluster $l$ updates its estimate $v_{j,k+1}^{l}$ to track the aggregate $\psi(\mathbf{x}_{k+1})$. In addition, the estimate $y_{j,k}^{l}$ of agent $j\in \mathcal{M}^{l}$ is to track $\frac{1}{m^{l}}\sum_{i=1}^{m^{l}}\nabla_{2}f_{i}^{l}(x_{i,k}^{l},mv_{i,k}^{l})$ which is only related to the agents within the cluster $l$. Thus, $y_{j,k+1}^{l}$ is updated using information received through the intra-cluster network and the local derivative increment $\nabla_{2}f_{j}^{l}(x_{j,k+1}^{l},mv_{j,k+1}^{l})-\nabla_{2}f_{j}^{l}(x_{j,k}^{l},mv_{j,k}^{l})$.

\begin{rem}
In the mirror descent scheme, an appropriate distance function can be selected to match the characteristics of the constraint set, which may yield a direct and computationally efficient update while reducing the cost associated with Euclidean projection. For example, for the simplex constraint set, i.e., $\Omega_{j}^{l}\triangleq\{x_{j}^{l}\geq0: \mathbf{1}_{d}^{\top}x_{j}^{l}=1\}$, the distance function can be selected as  $\phi(x_{j}^{l})=\sum_{i=1}^{d}[x_{j}^{l}]_{i}\ln[x_{j}^{l}]_{i}$. The corresponding Bregman divergence is $D_{\phi}(x_{j1}^{l},x_{j2}^{l})=\sum_{i=1}^{d}[x_{j2}^{l}]_{i}\ln\frac{[x_{j2}^{l}]_{i}}{[x_{j1}^{l}]_{i}}$, which is known as the Kullback-Leibler divergence. In this case, the update \eqref{Eq:x} takes the following form:
\begin{equation*}
[x_{j,k+1}^{l}]_i = \frac{
	[x_{j,k}^{l}]_i \exp\big( -\alpha_k [H_{j,k}^{l}]_{i} \big)
}{\sum\limits_{r=1}^{d} [ x_{j,k}^{l}]_r \exp\Big( -\alpha_k [H_{j,k}^{l}]_r \Big)
}, ~~\forall i\in \lceil d \rceil.	
\end{equation*}
Under the simplex constraint, this update has a computational complexity of $\mathcal{O}(d)$ per iteration, whereas the distributed projected gradient method in \citet{Zhao26} requires $\mathcal{O}(d\log d)$, thereby reducing the per-iteration computational burden. Moreover, the mirror descent scheme includes the Euclidean projected gradient method with extrapolation as a special case. Specifically, when $\phi(x_{j}^{l})=\frac{1}{2}\lVert x_{j}^{l}\rVert^{2}$, the corresponding Bregman divergence becomes $D_{\phi}(x_{j1}^{l},x_{j2}^{l})=\frac{1}{2}\lVert x_{j1}^{l}-x_{j2}^{l}\rVert^{2}$. In this case, the update \eqref{Eq:x} is equivalent to the projected update with extrapolation $x_{j,k+1}^{l}=P_{\Omega_{j}^{l}}[x_{j,k}^{l}-\alpha_k H_{j,k}^{l}]$, where $P_{\Omega_{j}^{l}}$ denotes the Euclidean projection onto the set $\Omega_{j}^{l}$.
\end{rem}

\section{Convergence Analysis}
\label{sec:CA}
In this section, the convergence analysis of the designed algorithm is provided in detail.
\subsection{Preliminary Results}
This subsection shows several auxiliary results that are crucial for establishing the subsequent convergence results. Firstly, for each agent $j$ within cluster $l\in\lceil h \rceil$, bounds are established for the aggregate tracking error $Ev^{l}_{j,k+1}=\lVert  v^{l}_{j,k+1}-\frac{\psi(\mathbf{x}_{k+1})}{m}\rVert$, as well as the gradient tracking error $Ey^l_{j,k+1}=\lVert y_{j,k+1}^{l}-\frac{1}{m^{l}}\sum_{i=1}^{m^{l}}\nabla_{2}f_{i}^{l}(x_{i,k+1}^{l},mv_{i,k+1}^{l}) \rVert_{*}$.

\begin{lem} \label{lem2}
Under Assumptions \ref{asum1}, \ref{asum2} and \ref{asum4}, suppose that, for any agent $j$ in cluster $l\in\lceil h \rceil$, the initial conditions $v_{j,0}^{l}=x_{j,0}^{l}$ and $y_{j,0}^{l}=\nabla_{2}f_{j}^{l}(x_{j,0}^{l},mv_{j,0}^{l})$ hold. Then, the aggregate and gradient tracking errors satisfy, respectively,
\begin{align}
		&Ev^{l}_{j,k+1} \leq\beta\eta^{k}m\hat{N}+m\tilde{W}\beta\sum_{s=1}^{k}\eta^{k-s}\alpha_{s-1}(1+2\lambda_{s-1})+\nonumber\\
		&~~2\alpha_{k}(1+2\lambda_{k})\tilde{W}, \label{eq:verr}\\
		&Ey^{l}_{j,k+1}\leq\beta\eta^{k}m^{l}N_{f}\!+\!\beta\eta^{k-1}\hat{L}m^{l}
		\Big\{(m\!+\!1)\alpha_{0}(1\!+\!2\lambda_{0})\tilde{W}\nonumber\\
		&~~\!+\!2m\hat{N}\Big\}\!+\!\beta\!\sum_{s=2}^{k}\eta^{k\!-\!s}\hat{L}m^{l}\Big\{(m\!+\!1)\alpha_{s\!-\!1}(1\!+\!2\lambda_{s\!-\!1})\tilde{W}\!+\!4m\nonumber\\
		&~~\alpha_{s\!-\!2}(1\!+\!2\lambda_{s\!-\!2})\tilde{W}\!+\!2m^{2}\big(\beta\eta^{s\!-\!2}\hat{N}\!+\!\tilde{W}\beta\!\sum_{r=1}^{s\!-\!2}\eta^{s\!-\!2\!-\!r}\alpha_{r\!-\!1}\nonumber\\
		&~~(1\!+\!2\lambda_{r\!-\!1})\big)\Big\}\!+\!2\hat{L}\Big\{(m\!+\!1)\alpha_{k}(1\!+\!2\lambda_{k})\tilde{W}\!+\!4m\alpha_{k\!-\!1}(1\!+\!2\lambda_{k\!-\!1})\nonumber\\
		&~~\tilde{W}\!+\!2m^{2}\big(\beta\eta^{k\!-\!1}\hat{N}\!+\!\tilde{W}\beta\!\sum_{s=1}^{k\!-\!1}\eta^{k\!-\!1\!-\!s}\alpha_{s\!-\!1}(1\!+\!2\lambda_{s\!-\!1})\big)\Big\}\label{eq:yerr},
	\end{align}
where $\tilde{W}\triangleq \bar{m}\tilde{W}_{y}+\bar{L}m\bar{m}\tilde{W}_{v}+\max\limits_{l\in\lceil h\rceil, j\in\mathcal{M}^{l}}\max\limits_{\mathbf{x}\in\Omega}\lVert F_{j}^{l}(\mathbf{x})\rVert_{*}$ with $\bar{m}\triangleq\max_{l\in\lceil h \rceil}m^l$, while $\tilde{W}_{v}$ and $\tilde{W}_{y}$ represent the auxiliary uniform bounds on the aggregate and the gradient tracking errors, respectively, as detailed in equations \eqref{eq:vx2} and \eqref{eq:yf2}.
\end{lem}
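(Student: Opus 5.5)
The argument is the standard unrolling for dynamic-average-consensus trackers. We combine Lemma~\ref{lem1} with a uniform bound on the strategy increments $\lVert x^{l}_{j,k+1}-x^{l}_{j,k}\rVert$ produced by the mirror step \eqref{Eq:x}.

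\textbf{Step 1: a one-step increment bound.} From the optimality condition of \eqref{Eq:x} and the $1$-strong convexity of $\phi$ we get $\lVert x^{l}_{j,k+1}-x^{l}_{j,k}\rVert\leq \alpha_k\lVert H^{l}_{j,k}\rVert_{*}$. Writing $H^{l}_{j,k}$ as the extrapolated combination, we have $\lVert H^{l}_{j,k}\rVert_*\leq(1+2\lambda_k)\max_{t\in\{k,k-1\}}\lVert \nabla_1 f^{l}_j(x^{l}_{j,t},mv^{l}_{j,t})+m^l y^{l}_{j,t}\rVert_*$. Each such term is compared with $F^{l}_j(\mathbf{x}_t)$ by adding and subtracting $G^{l}_j(x^l_t,\mathbf{1}\otimes\psi(\mathbf{x}_t))$. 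Assumption~\ref{asum2}(ii) then yields the error $\bar L\, m\,\bar m\,\max_i\lVert v^l_{i,t}-\psi(\mathbf{x}_t)/m\rVert$, and the discrepancy $m^l y^l_{j,t}-\sum_i\nabla_2 f^l_i$ is at most $\bar m$ times a gradient tracking error. Bounding these by the auxiliary uniform bounds $\tilde W_v,\tilde W_y$ gives $\lVert x^{l}_{j,k+1}-x^{l}_{j,k}\rVert\leq\alpha_k(1+2\lambda_k)\tilde W$. Here $\tilde W_v,\tilde W_y$ come from a preliminary crude argument: by compactness of $\Omega$, a rough recursion shows the trackers remain bounded, and these bounds are what \eqref{eq:vx2}, \eqref{eq:yf2} record. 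I expect this step to be the main obstacle, since the uniform bounds must be established without circularity. If needed, I would use an induction where the increments are bounded by the diameter of $\Omega^l_j$.

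\textbf{Step 2: aggregate tracking.} Stack \eqref{Eq:v} as $\mathbf{v}_{k+1}=(C^0_k\otimes\mathbf{I})\mathbf{v}_k+\mathbf{x}_{k+1}-\mathbf{x}_k$ and unroll it to obtain $\mathbf{v}_{k+1}=(\Xi^0_{(k,0)}\otimes\mathbf{I})\mathbf{v}_0+\sum_{s=1}^{k}(\Xi^0_{(k,s)}\otimes\mathbf{I})(\mathbf{x}_s-\mathbf{x}_{s-1})+\mathbf{x}_{k+1}-\mathbf{x}_k$. By double stochasticity and $v_{j,0}=x_{j,0}$, the average satisfies $\frac1m\sum v=\psi(\mathbf{x}_{k+1})/m$. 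Subtract $\frac1m\mathbf{1}\mathbf{1}^\top$ from each $\Xi$ and apply Lemma~\ref{lem1}. The initial term gives $\beta\eta^k m\hat N$ with $\hat N$ a bound on $\lVert x_{j,0}\rVert$. The middle terms give $m\beta\eta^{k-s}\alpha_{s-1}(1+2\lambda_{s-1})\tilde W$ by Step~1. The last increment together with its average contributes $2\alpha_k(1+2\lambda_k)\tilde W$. This yields \eqref{eq:verr}.

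\textbf{Step 3: gradient tracking.} Apply the same unrolling to \eqref{Eq:y} with $\Xi^l$. The initial condition $y_{j,0}=\nabla_2 f_j^l(x_{j,0},mv_{j,0})$ makes the average exact, and the initial term gives $\beta\eta^k m^l N_f$. Each subsequent increment $\nabla_2 f^l_j(x_{j,s},mv_{j,s})-\nabla_2 f^l_j(x_{j,s-1},mv_{j,s-1})$ is bounded via Assumption~\ref{asum2}(iii) by $\hat L(\lVert x_{j,s}-x_{j,s-1}\rVert+m\lVert v_{j,s}-v_{j,s-1}\rVert)$. For the second term we write $v_{j,s}-v_{j,s-1}=(v_{j,s}-\psi(\mathbf{x}_s)/m)-(v_{j,s-1}-\psi(\mathbf{x}_{s-1})/m)+(\psi(\mathbf{x}_s)-\psi(\mathbf{x}_{s-1}))/m$. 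The last piece is at most $\alpha_{s-1}(1+2\lambda_{s-1})\tilde W$ per agent, and the two error pieces are bounded by Step~2 evaluated at $s-1$ and $s-2$. Collecting terms gives the braces $(m+1)\alpha_{s-1}(\cdot)\tilde W+4m\alpha_{s-2}(\cdot)\tilde W+2m^2(\beta\eta^{s-2}\hat N+\dots)$. The special case $s=1$ uses $v_0$ directly, which produces the $2m\hat N$ term. The final non-averaged increment at step $k+1$ produces the trailing $2\hat L\{\cdots\}$ term. Substituting everything gives \eqref{eq:yerr}.
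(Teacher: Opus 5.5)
Your Steps 1 and 2 coincide with the paper's argument. The crude bounds $\tilde W_v,\tilde W_y$ follow from $\lVert x^u_{i,s}-x^u_{i,s-1}\rVert\le 2\hat N$ alone, so there is no circularity. One small fix in Step 1: the quantity to add and subtract is $G^l_j(x^l_t,mv^l_t)$, because $G^l_j(x^l_t,\mathbf 1_{m^l}\otimes\mathbf I_d\psi(\mathbf x_t))$ is $F^l_j(\mathbf x_t)$ itself.

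The gap is in Step 3. Splitting $v^l_{j,s}-v^l_{j,s-1}$ into the tracking errors at times $s$ and $s-1$ plus the aggregate increment does not yield the braces you claim. Write $\rho_r\triangleq\alpha_r(1+2\lambda_r)$. Bound $mEv^l_{j,s}$ by \eqref{eq:verr} at index $s-1$ and $mEv^l_{j,s-1}$ at index $s-2$, use $\lVert\psi(\mathbf x_s)-\psi(\mathbf x_{s-1})\rVert\le m\rho_{s-1}\tilde W$, and add $\lVert x^l_{i,s}-x^l_{i,s-1}\rVert\le\rho_{s-1}\tilde W$. What you actually get is $(3m+1)\rho_{s-1}\tilde W+(2m+m^2\beta)\rho_{s-2}\tilde W+m^2\beta\hat N(\eta^{s-1}+\eta^{s-2})+m^2\beta\tilde W(1+\eta)\sum_{r=1}^{s-2}\eta^{s-2-r}\rho_{r-1}$. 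The claimed bracket is $(m+1)\rho_{s-1}\tilde W+4m\rho_{s-2}\tilde W+2m^2(\beta\eta^{s-2}\hat N+\tilde W\beta\sum_{r=1}^{s-2}\eta^{s-2-r}\rho_{r-1})$. The first is not dominated by the second for general $\alpha_k,\lambda_k$. At $s=2$ the excess is $2m\rho_1\tilde W+(m^2\beta-2m)\rho_0\tilde W-m^2\beta(1-\eta)\hat N$, which is positive when, e.g., $\eta$ is close to $1$. So your Step 3 proves a valid but different bound, not \eqref{eq:yerr}. The trailing $2\hat L\{\cdots\}$ term has the same defect.

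The missing idea is to bound the increment through the recursion \eqref{Eq:v} itself, so that only consensus errors at a single time enter. With $\bar v_{s-1}=\psi(\mathbf x_{s-1})/m$,
\[
v^l_{j,s}-v^l_{j,s-1}=\sum_{u,i}[C^0_{s-1}]^{lu}_{ji}(v^u_{i,s-1}-\bar v_{s-1})-(v^l_{j,s-1}-\bar v_{s-1})+x^l_{j,s}-x^l_{j,s-1}.
\]
Row stochasticity then gives $m\lVert v^l_{j,s}-v^l_{j,s-1}\rVert\le 2m\max_{u,i}Ev^u_{i,s-1}+m\rho_{s-1}\tilde W$. Applying \eqref{eq:verr} at index $s-2$ produces exactly $4m\rho_{s-2}\tilde W+2m^2(\beta\eta^{s-2}\hat N+\tilde W\beta\sum_{r=1}^{s-2}\eta^{s-2-r}\rho_{r-1})$. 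This is why only $\eta^{s-2}$ and $\sum_{r=1}^{s-2}$ appear in \eqref{eq:yerr}, and the $+1$ in $(m+1)$ comes from $\lVert x^l_{i,s}-x^l_{i,s-1}\rVert$. For $s=1$, the same identity together with $\lVert\sum_{u,i}[C^0_0]^{lu}_{ji}v^u_{i,0}-v^l_{j,0}\rVert\le 2\hat N$ gives $2m\hat N+(m+1)\rho_0\tilde W$. Applied at index $k$, it gives the trailing $2\hat L\{\cdots\}$ term. With this replacement, the rest of your Step 3 goes through as in the paper.
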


\noindent\textbf{Proof.} First, one establishes the auxiliary
	uniform bounds $\tilde W_v$ and $\tilde W_y$ for the aggregate
	and gradient tracking errors. Under Assumption \ref{asum4} and the given initial condition, it follows by induction that
\begin{equation}
	\sum\limits_{l=1}^{h}\sum_{j=1}^{m^{l}}v_{j,k}^{l}=\sum\limits_{l=1}^{h}\sum_{j=1}^{m^{l}}x_{j,k}^{l}=\psi(\mathbf{x}_{k}), ~~\forall k\geq0.
	\label{Eq:vagg}
\end{equation}	
Based on the update form \eqref{Eq:v} of $v_{j,k+1}^{l}$, recursively expanding
$Ev^{l}_{j,k+1}$ yields
\begin{equation}
	\begin{aligned}
&Ev^{l}_{j,k+1} \leq\sum\limits_{u=1}^{h}\sum_{i=1}^{m^{u}}\left|\frac{1}{m}-[\Xi^{0}_{(k,0)}]_{ji}^{lu}\right|\lVert v_{i,0}^{u}\rVert+\sum_{s=1}^{k}\sum\limits_{u=1}^{h}\sum_{i=1}^{m^{u}}\\
&~~\left|\frac{1}{m}-[\Xi^{0}_{(k,s)}]_{ji}^{lu}\right|\lVert x_{i,s}^{u}-x_{i,s-1}^{u}\rVert+\frac{1}{m}\sum\limits_{u=1}^{h}\sum_{i=1}^{m^{u}}\lVert x_{i,k+1}^{u}\\
&~~-x_{i,k}^{u}\rVert+\lVert x_{j,k+1}^{l}-x_{j,k}^{l}\rVert\\
&\leq\beta\eta^{k}\sum\limits_{u=1}^{h}\sum_{i=1}^{m^{u}}\lVert x_{i,0}^{u}\rVert+\beta\sum_{s=1}^{k}\eta^{k-s}\sum\limits_{u=1}^{h}\sum_{i=1}^{m^{u}}\lVert x_{i,s}^{u}\!-\!x_{i,s-1}^{u}\rVert\\
&~~+\frac{1}{m}\sum\limits_{u=1}^{h}\sum_{i=1}^{m^{u}}\lVert x_{i,k+1}^{u}\!-\!x_{i,k}^{u}\rVert\!+\!\lVert x_{j,k+1}^{l}\!-\!x_{j,k}^{l}\rVert,
\end{aligned}
\label{eq:vx}
\end{equation}	
where the second inequality follows from Lemma~\ref{lem1}. Under Assumption \ref{asum1}, define $\hat{N}\triangleq\max\limits_{l\in\lceil h\rceil, j\in\mathcal{M}^{l}}\max\limits_{x_{j}^{l}\in\Omega_{j}^{l}}\left\|x_{j}^{l}\right\|$, then one has
\begin{align}
Ev^{l}_{j,k+1}& \leq m\hat{N}\beta\eta^{k}\!+\!2m\hat{N}\beta\sum_{s=1}^{k}\eta^{k-s}\!+\!4\hat{N}\nonumber\\
&\leq \underbrace{m\hat{N}\beta\!+\!2m\hat{N}\beta/(1-\eta)\!+\!4\hat{N}}_{\triangleq\tilde{W}_{v}}\label{eq:vx2}.
\end{align}	
Thus, $\tilde W_v$ provides an auxiliary uniform bound on the
	aggregate tracking error. Similarly, based on Assumption \ref{asum4} and the initial condition $y_{j,0}^{l}=\nabla_{2}f_{j}^{l}(x_{j,0}^{l},mv_{j,0}^{l})$, there exists for any cluster $l\in\lceil h \rceil$,
\begin{equation}
	\sum_{j=1}^{m^{l}}y_{j,k}^{l}=\sum_{j=1}^{m^{l}}\nabla_{2}f_{j}^{l}(x_{j,k}^{l},mv_{j,k}^{l}), ~~\forall k\geq0.
	\label{Eq:ygra}
\end{equation}	
Then by considering the update form \eqref{Eq:y} of $y_{j,k+1}^{l}$, one can obtain
\begin{equation}
	\begin{aligned}
	&Ey^l_{j,k+1}\leq\!\sum_{i=1}^{m^{l}}\left|\frac{1}{m^{l}}\!-\![\Xi^{l}_{(k,0)}]_{ji}\right|\lVert y_{i,0}^{l}\rVert_{*}\!+\!\sum_{s=1}^{k}\sum_{i=1}^{m^{l}}\left|\frac{1}{m^{l}}\right.\\
	&~~\left.-[\Xi^{l}_{(k,s)}]_{ji}\right|\lVert \nabla_{2}f_{i}^{l}(x_{i,s}^{l},mv_{i,s}^{l})\!-\!\nabla_{2}f_{i}^{l}(x_{i,s\!-\!1}^{l},mv_{i,s\!-\!1}^{l})\rVert_{*}\\
	&~~\!+\!\frac{1}{m^{l}}\sum_{i=1}^{m^{l}}\lVert\nabla_{2}f_{i}^{l}(x_{i,k\!+\!1}^{l},mv_{i,k\!+\!1}^{l})-\nabla_{2}f_{i}^{l}(x_{i,k}^{l},mv_{i,k}^{l})\rVert_{*}\\
&~~\!+\!\lVert \nabla_{2}f_{j}^{l}(x_{j,k\!+\!1}^{l},mv_{j,k\!+\!1}^{l})\!-\!\nabla_{2}f_{j}^{l}(x_{j,k}^{l},mv_{j,k}^{l})\rVert_{*},\\
&\leq\beta\eta^{k}\sum_{i=1}^{m^{l}}\lVert \nabla_{2}f_{i}^{l}(x_{i,0}^{l},mv_{i,0}^{l})\rVert_{*}\!+\!\beta\sum_{s=1}^{k}\eta^{k-s}\hat{L}\sum_{i=1}^{m^{l}}\big(\lVert x_{i,s}^{l}\!-\\
&~~x_{i,s\!-\!1}^{l}\rVert\!+\!m\lVert v_{i,s}^{l}\!-\!v_{i,s\!-\!1}^{l}\rVert \big)\!+\hat{L}\!\big(\lVert x_{j,k\!+\!1}^{l}\!-\!x_{j,k}^{l}\rVert\!+\!m\lVert v_{j,k\!+\!1}^{l}\\
&~~\!-\!v_{j,k}^{l}\rVert\big)\!+\!\frac{\hat{L}}{m^{l}}\sum_{i=1}^{m^{l}}\big(\lVert x_{i,k\!+\!1}^{l}\!-\!x_{i,k}^{l}\rVert\!+\!m\lVert v_{i,k\!+\!1}^{l}\!-\!v_{i,k}^{l}\rVert\big).
\end{aligned}
\label{eq:yf}	
\end{equation}
Notably, $\|\psi(\mathbf{x}_{k})\|=\|\sum_{j=1}^{m}x_{j,k}\|\leq m\hat{N}$. Together with \eqref{eq:vx2}, this implies that $\|mv_{j,k}^{l}\|\leq m\tilde{W}_{v}+m\hat{N}$ for all $k\geq0$. Moreover, the initialization $v_{j,0}^{l}=x_{j,0}^{l}$ implies that $f_{j}^{l}(x_{j,0}^{l},mv_{j,0}^{l})=f_{j}^{l}(x_{j,0}^{l},mx_{j,0}^{l})$ with $x_{j,0}^{l}\in\Omega_{j}^{l}$. It then follows from Assumption~\ref{asum1} that $\nabla_{2}f_{j}^{l}(x_{j,0}^{l},mv_{j,0}^{l})$ is bounded for any $j$ in cluster $l\in\lceil h \rceil$. Accordingly, define $N_{f}\triangleq\max\limits_{l\in\lceil h \rceil}\max\limits_{j\in\mathcal{M}^{l}}\max\limits_{x_{j,0}^{l}\in\Omega_{j}^{l}}\lVert \nabla_{2}f_{j}^{l}(x_{j,0}^{l},mv_{j,0}^{l})\rVert_{*}$. Further, one derives that
\begin{equation}
	\begin{aligned}
		&Ey^l_{j,k+1}\leq\beta\eta^{k}m^{l}N_{f}\!+\!2\beta\hat{L}m^{l}\sum_{s=1}^{k}\eta^{k-s}\big(\hat{N}+m\tilde{W}_{v}+m\hat{N}\big)\\
		&~~\!+4\hat{L}\!\big(\hat{N}+m\tilde{W}_{v}+m\hat{N}\big)\\
		&\leq \underbrace{\beta \bar{m}N_{f}\!+\!\frac{2\beta\hat{L}\bar{m}\!\big(\hat{N}\!+\!m\tilde{W}_{v}\!+\!m\hat{N}\!\big)}{(1-\eta)}\!+\!4\hat{L}\!\big(\hat{N}\!+\!m\tilde{W}_{v}\!+\!m\hat{N}\!\big)}_{\triangleq\tilde{W}_{y}}.
	\end{aligned}
	\label{eq:yf2}	
\end{equation}
where $\bar{m}\triangleq\max_{l\in\lceil h \rceil}m^l$. Hence, $\tilde{W}_y$ provides an auxiliary uniform bound on the gradient tracking error.

Next, based on the auxiliary uniform bounds
	$\tilde{W}_{v}$ and $\tilde{W}_{y}$, one derives an upper bound
	on the strategy increment $\|x_{j,k+1}^{l}-x_{j,k}^{l}\|$, which is
	subsequently used to establish the iteration-dependent bound for
	the aggregate tracking error. Specifically, the optimality condition of
\eqref{Eq:x}, together with the definition of the Bregman divergence
in \eqref{eq:breg}, gives
\begin{equation*}
	\begin{aligned}
		&(\bar{x}_{j}^{l}-x_{j,k+1}^{l})^{\top}[\alpha_{k}\big((1+\lambda_{k})(\nabla_{1}f_{j}^{l}(x_{j,k}^{l},mv_{j,k}^{l})+m^{l}y_{j,k}^{l})-\\
		&\lambda_{k}(\nabla_{1}f_{j}^{l}(x_{j,k-1}^{l},mv_{j,k-1}^{l})+m^{l}y_{j,k-1}^{l})\big)+\nabla \phi(x_{j,k+1}^{l})\\
		&-\nabla \phi(x_{j,k}^{l})]\geq0, ~~~~~~\forall \bar{x}_{j}^{l}\in\Omega_{j}^{l}.
	\end{aligned}
\end{equation*}
Setting $\bar{x}_{j}^{l}=x_{j,k}^{l}$ and using the 1-strong convexity of $\phi$ yield
\begin{equation}
	\begin{aligned}
		&\lVert x_{j,k}^{l}-x_{j,k+1}^{l} \rVert\leq\alpha_{k}\big((1+\lambda_{k})\lVert\nabla_{1}f_{j}^{l}(x_{j,k}^{l},mv_{j,k}^{l})+\\
		&m^{l}y_{j,k}^{l}\rVert_{*}+\lambda_{k}\lVert\nabla_{1}f_{j}^{l}(x_{j,k-1}^{l},mv_{j,k-1}^{l})+m^{l}y_{j,k-1}^{l}\rVert_{*}\big).
	\end{aligned}
	\label{eq:xdiff}	
\end{equation}
For the term $\lVert\nabla_{1}f_{j}^{l}(x_{j,k}^{l},mv_{j,k}^{l})+m^{l}y_{j,k}^{l}\rVert_{*}$, by adding and subtracting $F_{j}^{l}(\mathbf{x}_{k})$ and $G_{j}^{l}(x^{l}_{k},mv_{k}^{l})$ with $v_{k}^{l}=col((v_{j,k}^{l})_{j\in \mathcal{M}^{l}})$, it can be reformulated as
\begin{equation*}
	\begin{aligned}
		&\lVert\nabla_{1}f_{j}^{l}(x_{j,k}^{l},mv_{j,k}^{l})\!+\!m^{l}y_{j,k}^{l}\rVert_{*}\!=\!\lVert\nabla_{1}f_{j}^{l}(x_{j,k}^{l},mv_{j,k}^{l})\!+\!m^{l}y_{j,k}^{l}\\	
		&~~~~-G_{j}^{l}(x^{l}_{k},mv_{k}^{l})+G_{j}^{l}(x^{l}_{k},mv_{k}^{l})-F_{j}^{l}(\mathbf{x}_{k})+F_{j}^{l}(\mathbf{x}_{k})\rVert_{*}\\
		&=\lVert m^{l}y_{j,k}^{l}-\sum_{i=1}^{m^{l}}\nabla_{2}f_{i}^{l}(x_{i,k}^{l},mv_{i,k}^{l})+G_{j}^{l}(x^{l}_{k},mv_{k}^{l})-\\
		&~~~~G_{j}^{l}(x_k^{l},\mathbf{1}_{m^{l}}\otimes\mathbf{I}_{d}\psi(\mathbf{x}_{k}))+F_{j}^{l}(\mathbf{x}_{k})\rVert_{*}\\
		&\leq m^{l}Ey^l_{j,k}+\bar{L}\sum_{i=1}^{m^{l}}mEv^{l}_{i,k+1}+\lVert F_{j}^{l}(\mathbf{x}_{k})\rVert_{*}\\
		&\leq\underbrace{ \bar{m}\tilde{W}_{y}+\bar{L}m\bar{m}\tilde{W}_{v}+\max\limits_{l\in\lceil h\rceil, j\in\mathcal{M}^{l}}\max\limits_{\mathbf{x}\in\Omega}\lVert F_{j}^{l}(\mathbf{x})\rVert_{*}}_{\triangleq\tilde{W}}.
	\end{aligned}
\end{equation*}
The first inequality follows from the triangle inequality and Assumption~\ref{asum2}. Substituting this result back into \eqref{eq:xdiff} results in
\begin{equation}
		\lVert x_{j,k}^{l}-x_{j,k+1}^{l} \rVert\leq\alpha_{k}(1+2\lambda_{k})\tilde{W}.
	\label{eq:xdiff2}	
\end{equation}
Substituting this bound into the recursive estimate \eqref{eq:vx} for the aggregate tracking error yields \eqref{eq:verr}, as desired.

Finally, based on the preceding results, the explicit
	iteration-dependent bound for the gradient tracking error is derived. Since the recursion
	\eqref{eq:yf} involves the variation
	$\|v_{j,k+1}^{l}-v_{j,k}^{l}\|$, an upper bound on this term needs to be
	established. By introducing $\bar{v}_{k}\triangleq\frac{1}{m}\sum\limits_{l=1}^{h}\sum\limits_{j=1}^{m^{l}}v_{j,k}^{l}=\frac{1}{m}\psi(\mathbf{x}_{k})$ into this term, the following relation is derived for $k\geq1$:
\begin{equation}
	\begin{aligned}
		&m\lVert v_{j,k+1}^{l}\!-\!v_{j,k}^{l}\rVert=m\lVert v_{j,k+1}^{l}\!-\bar{v}_{k}+\bar{v}_{k}\!-v_{j,k}^{l}\rVert\\
		&=m\lVert \sum\limits_{u=1}^{h}\sum_{i=1}^{m^{u}} [C_{k}^{0}]_{ji}^{lu} v_{i,k}^{u}
		+x_{j,k+1}^{l}-x_{j,k}^{l}\!-\bar{v}_{k}+\bar{v}_{k}\!-\!v_{j,k}^{l}\rVert\\
		&\leq m\big(\sum\limits_{u=1}^{h}\sum_{i=1}^{m^{u}}[C_{k}^{0}]_{ji}^{lu}\lVert   v_{i,k}^{u}\!-\!\bar{v}_{k}\rVert\!+\!\lVert v_{j,k}^{l}\!-\!\bar{v}_{k}\rVert\!+\!\lVert x_{j,k\!+\!1}^{l}\!-\!x_{j,k}^{l}\rVert\big)\\
		&\leq m[2\big(\beta\eta^{k-1}m\hat{N}+m\tilde{W}\beta\sum_{s=1}^{k-1}\eta^{k-1-s}\alpha_{s-1}(1+2\lambda_{s-1})\big)\!\\
		&~~+\!4\alpha_{k-1}(1+2\lambda_{k-1})\tilde{W}+\alpha_{k}(1+2\lambda_{k})\tilde{W}],
	\end{aligned}
\label{eq:vdiff}
\end{equation}
where the first inequality follows from Assumption \ref{asum4}, and the last one is obtained by invoking \eqref{eq:verr} and \eqref{eq:xdiff2}. For $k=0$, one has $m\| v_{j,1}^{l}\!-\!v_{j,0}^{l}\|\leq2m\hat{N}+m\alpha_0(1+2\lambda_0)\tilde{W}$. Therefore, the $s=1$ term in \eqref{eq:yf} is treated separately, while \eqref{eq:vdiff} is used to bound the subsequent aggregate estimate variations. Together with \eqref{eq:xdiff2}, substituting these bounds into \eqref{eq:yf} yields the desired iteration-dependent gradient-tracking error bound \eqref{eq:yerr}. $\blacksquare$

Next, a well-known technical result concerning the optimality condition for \eqref{Eq:x} is derived through the application of the three-point identity from the Bregman divergence.	

\begin{lem}\citep{Lan18} \label{lem3}
Let $D_{\phi}$ be defined in \eqref{eq:breg}, the following inequality holds,
\begin{equation}
	\begin{aligned}
		&\alpha_{k}\langle H_{j,k}^{l}, x_{j,k+1}^{l}-\bar{x}_{j}^{l} \rangle+D_{\phi}(x_{j,k}^{l},x_{j,k+1}^{l})\\
		&\leq D_{\phi}(x_{j,k}^{l},\bar{x}_{j}^{l})-D_{\phi}(x_{j,k+1}^{l},\bar{x}_{j}^{l}) ~~~~~~\forall \bar{x}_{j}^{l}\in\Omega_{j}^{l}.
	\end{aligned}
	\label{eq:opt1}	
\end{equation}
\end{lem}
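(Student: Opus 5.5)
The plan is to combine the first-order optimality condition of the subproblem \eqref{Eq:x} with the three-point identity for the Bregman divergence \eqref{eq:breg}. Since $\Omega_{j}^{l}$ is convex and closed (Assumption \ref{asum1}) and $\bar{x}\mapsto\alpha_{k}\langle H_{j,k}^{l},\bar{x}\rangle+D_{\phi}(x_{j,k}^{l},\bar{x})$ is strongly convex and differentiable in $\bar{x}$, the minimizer $x_{j,k+1}^{l}$ exists and is unique. Its gradient in $\bar{x}$ is $\alpha_{k}H_{j,k}^{l}+\nabla\phi(\bar{x})-\nabla\phi(x_{j,k}^{l})$. Hence the variational characterization of the minimizer, already written out in the proof of Lemma \ref{lem2}, gives
\begin{equation*}
\langle \alpha_{k}H_{j,k}^{l}+\nabla\phi(x_{j,k+1}^{l})-\nabla\phi(x_{j,k}^{l}),\,\bar{x}_{j}^{l}-x_{j,k+1}^{l}\rangle\geq0,\quad\forall\bar{x}_{j}^{l}\in\Omega_{j}^{l}.
\end{equation*}
Rearranging, $\alpha_{k}\langle H_{j,k}^{l},x_{j,k+1}^{l}-\bar{x}_{j}^{l}\rangle\leq\langle\nabla\phi(x_{j,k+1}^{l})-\nabla\phi(x_{j,k}^{l}),\bar{x}_{j}^{l}-x_{j,k+1}^{l}\rangle$.

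The second step is the three-point identity. Note the paper's argument order: $D_{\phi}(a,b)=\phi(b)-\phi(a)-\langle\nabla\phi(a),b-a\rangle$. For any $a,b,c$, expanding the three divergences and cancelling the $\phi$ values gives
\begin{equation*}
D_{\phi}(a,c)-D_{\phi}(b,c)-D_{\phi}(a,b)=\langle\nabla\phi(b)-\nabla\phi(a),\,c-b\rangle.
\end{equation*}
The $\phi$ terms cancel to $0$. The remaining inner products are $-\langle\nabla\phi(a),c-a\rangle+\langle\nabla\phi(b),c-b\rangle+\langle\nabla\phi(a),b-a\rangle$, which equals $\langle\nabla\phi(b)-\nabla\phi(a),c-b\rangle$.

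Substituting $a=x_{j,k}^{l}$, $b=x_{j,k+1}^{l}$, $c=\bar{x}_{j}^{l}$ turns the right-hand side of the optimality inequality into $D_{\phi}(x_{j,k}^{l},\bar{x}_{j}^{l})-D_{\phi}(x_{j,k+1}^{l},\bar{x}_{j}^{l})-D_{\phi}(x_{j,k}^{l},x_{j,k+1}^{l})$. Moving the last term to the left-hand side yields \eqref{eq:opt1} exactly.

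The only real care point is bookkeeping with the asymmetric argument convention of \eqref{eq:breg}: the first argument is the base point of linearization. I would verify the identity directly by expansion rather than quote a version with reversed arguments, since a sign or ordering slip there is the main risk. Existence and optimality of $x_{j,k+1}^{l}$ are routine given compactness, convexity and the strong convexity of $\phi$. Differentiability of $\phi$ on $\Omega_{j}^{l}$ is assumed in Definition \ref{defn1}, which justifies using the gradient optimality condition.
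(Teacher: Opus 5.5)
Your proposal is correct and takes the same route the paper indicates. The paper gives no written proof beyond citing \citet{Lan18}; it attributes the inequality to the first-order optimality condition of \eqref{Eq:x}, which it writes out in the proof of Lemma~\ref{lem2}, together with the three-point identity for $D_{\phi}$. Your expansion $D_{\phi}(a,c)-D_{\phi}(b,c)-D_{\phi}(a,b)=\langle\nabla\phi(b)-\nabla\phi(a),c-b\rangle$ under the paper's argument ordering checks out, and the substitution $a=x_{j,k}^{l}$, $b=x_{j,k+1}^{l}$, $c=\bar{x}_{j}^{l}$ yields \eqref{eq:opt1} exactly.
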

Furthermore, let
\begin{align}
	&\Delta Y_{j,k}^{l}\triangleq(\nabla_{1}f_{j}^{l}(x_{j,k}^{l},mv_{j,k}^{l})+m^{l}y_{j,k}^{l})-\nonumber\\
	&~~~~~~~~~(\nabla_{1}f_{j}^{l}(x_{j,k-1}^{l},mv_{j,k-1}^{l})+m^{l}y_{j,k-1}^{l}),\label{eq:Y}\\
	&\Delta D_{j,k}^{l}(\bar{x}_{j}^{l})\triangleq D_{\phi}(x_{j,k}^{l},\bar{x}_{j}^{l})-D_{\phi}(x_{j,k+1}^{l},\bar{x}_{j}^{l}),\label{eq:Dj}\\
	&\Delta D_{k}(\bar{\mathbf{x}})\!=\!\sum_{l\in\lceil h\rceil}\sum_{j\in\mathcal{M}^{l}}\Delta D_{j,k}^{l}(\bar{x}_{j}^{l})\!=\!\bar{D}_{\phi}(\mathbf{x}_{k},\bar{\mathbf{x}})\!-\!\bar{D}_{\phi}(\mathbf{x}_{k\!+\!1},\bar{\mathbf{x}}), \label{eq:dto}
\end{align}	
where $\bar{\mathbf{x}}\triangleq col(((\bar{x}_{j}^{l})_{j\in \mathcal{M}^{l}})_{l\in\lceil h\rceil})$. Then, a lemma is presented to facilitate the subsequent analysis of the convergence properties of Algorithm \ref{alg:alg1}.

\begin{lem}\label{lem4}
Under Assumptions \ref{asum1} and \ref{asum2}, let $\mathbf{x}_{k}$ be generated by Algorithm \ref{alg:alg1}, and suppose that there exists a positive sequence $\{\epsilon_{k}\}_{k\geq0}$ satisfying
		\begin{align}
			&\epsilon_{k+1}\alpha_{k+1}\lambda_{k+1}=\epsilon_{k}\alpha_{k}\label{eq:ep1}\\
			&\epsilon_{k-1}\geq16\alpha_{k}^{2}\lambda_{k}^{2}L^{2}\epsilon_{k}\label{eq:ep2}.
		\end{align}
Then, for any $\bar{\mathbf{x}}\in\Omega$, one has 
\begin{equation}
	\begin{aligned}
		&\sum_{k=0}^{t}\epsilon_{k}\big(\alpha_{k}\sum\limits_{l=1}^{h}\sum_{j=1}^{m^{l}}\langle \nabla_{1}f_{j}^{l}(x_{j,k\!+\!1}^{l},mv_{j,k+1}^{l})\!+\!m^{l}y_{j,k+1}^{l},x_{j,k\!+\!1}^{l}\!-\!\\
		&\bar{x}_{j}^{l}\rangle\!+\!\bar{D}_{\phi}(\mathbf{x}_{k+1},\bar{\mathbf{x}})\!+\!\frac{1}{2}\bar{D}_{\phi}(\mathbf{x}_{k},\mathbf{x}_{k+1})\big)\!-\!2\epsilon_{t}\alpha_{t}^{2}L^{2}\lVert\mathbf{x}_{t+1}\!-\!\bar{\mathbf{x}}\rVert^{2}\\
		&\leq\sum_{k=0}^{t}\epsilon_{k}\bar{D}_{\phi}(\mathbf{x}_{k},\bar{\mathbf{x}})+\sum_{k=0}^{t}\epsilon_{k}\alpha_{k}\lambda_{k}\delta_{k}+\epsilon_{t}\alpha_{t}\delta_{t\!+\!1},
	\end{aligned}
\label{eq:lem4}	
\end{equation}
where
\begin{equation}
\begin{aligned}
&\delta_{k}\!\triangleq\!2\hat{N}\sum\limits_{l=1}^{h}\!\sum_{j=1}^{m^{l}}\!\big[m^l(Ey^l_{j,k}\!+\!Ey^l_{j,k\!-\!1})\!+\!\bar{L}m\sum_{i=1}^{m^{l}}(Ev^{l}_{i,k}\!+\!Ev^{l}_{i,k\!-\!1})\big].		
\end{aligned}
\label{eq:delta}
\end{equation}		
\end{lem}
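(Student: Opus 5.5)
The plan is to follow the gradient-extrapolation analysis of \citet{Lan18}, with the tracking errors carried along as perturbations. Write $g_{j,k}^{l}\triangleq\nabla_{1}f_{j}^{l}(x_{j,k}^{l},mv_{j,k}^{l})+m^{l}y_{j,k}^{l}$, so that $H_{j,k}^{l}=g_{j,k}^{l}+\lambda_k\Delta Y_{j,k}^{l}$ with $\Delta Y_{j,k}^{l}$ as in \eqref{eq:Y}.

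Starting from Lemma~\ref{lem3}, I would rewrite $\langle H_{j,k}^{l},x_{j,k+1}^{l}-\bar x_{j}^{l}\rangle$ as
\[
\langle g_{j,k+1}^{l},x_{j,k+1}^{l}-\bar x_j^l\rangle-\langle \Delta Y_{j,k+1}^{l},x_{j,k+1}^{l}-\bar x_j^l\rangle+\lambda_k\langle\Delta Y_{j,k}^{l},x_{j,k}^{l}-\bar x_j^l\rangle+\lambda_k\langle\Delta Y_{j,k}^{l},x_{j,k+1}^{l}-x_{j,k}^{l}\rangle .
\]
Summing over all agents gives a per-iteration inequality. It contains the desired term $\alpha_k\sum\langle g_{k+1},x_{k+1}-\bar x\rangle$, the telescoping pair $\alpha_k\langle\Delta Y_{k+1},x_{k+1}-\bar x\rangle$ versus $\alpha_k\lambda_k\langle\Delta Y_k,x_k-\bar x\rangle$, and a cross term $\alpha_k\lambda_k\langle \Delta Y_k,x_{k+1}-x_k\rangle$. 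Multiplying by $\epsilon_k$ and summing over $k=0,\dots,t$, condition \eqref{eq:ep1} makes the pair telescope. What remains is the boundary term $\epsilon_t\alpha_t\langle\Delta Y_{t+1},x_{t+1}-\bar x\rangle$ at the top and $\epsilon_0\alpha_0\lambda_0\langle\Delta Y_0,\cdot\rangle$ at the bottom. The bottom term vanishes because the initialization sets $x_{-1}=x_0$, $v_{-1}=v_0$ and $y_{-1}=y_0$.

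The key step is to split each $\Delta Y_{j,k}^{l}$ into a pseudo-gradient difference plus tracking errors. Adding and subtracting $G_{j}^{l}(x_k^l,mv_k^l)$ and $F_j^l(\mathbf{x}_k)$, exactly as in the proof of Lemma~\ref{lem2}, gives
\[
\|g_{j,k}^{l}-F_j^l(\mathbf{x}_k)\|_*\le m^lEy_{j,k}^l+\bar L m\sum_i Ev_{i,k}^l .
\]
Hence $\Delta Y_k=F(\mathbf{x}_k)-F(\mathbf{x}_{k-1})+e_k$, where the error $e_k$ collects tracking errors at times $k$ and $k-1$. Pairing $e_k$ with any difference of points in $\Omega$, each of norm at most $2\hat N$, produces exactly $\delta_k$ as defined in \eqref{eq:delta}. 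This yields the terms $\epsilon_k\alpha_k\lambda_k\delta_k$ from the cross terms and $\epsilon_t\alpha_t\delta_{t+1}$ from the top boundary term.

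For the exact pseudo-gradient parts, Assumption~\ref{asum2}(i) and Young's inequality give two bounds. The cross term satisfies $\alpha_k\lambda_k L\|\mathbf{x}_k-\mathbf{x}_{k-1}\|\|\mathbf{x}_{k+1}-\mathbf{x}_k\|\le 4\alpha_k^2\lambda_k^2L^2\|\mathbf{x}_k-\mathbf{x}_{k-1}\|^2/2+\frac18\|\mathbf{x}_{k+1}-\mathbf{x}_k\|^2$. The top term satisfies $\epsilon_t\alpha_tL\|\mathbf{x}_{t+1}-\mathbf{x}_t\|\|\mathbf{x}_{t+1}-\bar{\mathbf x}\|\le 2\epsilon_t\alpha_t^2L^2\|\mathbf{x}_{t+1}-\bar{\mathbf x}\|^2+\frac{\epsilon_t}{8}\|\mathbf{x}_{t+1}-\mathbf{x}_t\|^2$. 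Then \eqref{eq:relat} turns $\frac12\|\cdot\|^2$ into $\bar D_\phi$. Condition \eqref{eq:ep2} ensures that $\epsilon_k\cdot 2\alpha_k^2\lambda_k^2L^2\|\mathbf{x}_k-\mathbf{x}_{k-1}\|^2\le\frac{\epsilon_{k-1}}{8}\|\mathbf{x}_k-\mathbf{x}_{k-1}\|^2\le \frac{\epsilon_{k-1}}4\bar D_\phi(\mathbf{x}_{k-1},\mathbf{x}_k)$. So the shifted cross terms are absorbed by half of the $\epsilon_{k-1}\bar D_\phi(\mathbf{x}_{k-1},\mathbf{x}_k)$ term supplied by Lemma~\ref{lem3}. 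This leaves the $\frac12\bar D_\phi(\mathbf{x}_k,\mathbf{x}_{k+1})$ on the left side of \eqref{eq:lem4}.

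I expect the main obstacle to be the bookkeeping in this absorption step, since the constants must match $16$ and $\frac12$. The Bregman divergence is asymmetric, so the argument order in $D_\phi(x_k,x_{k+1})$ has to follow Lemma~\ref{lem3}. The top boundary term must also be handled so that it leaves only the $-2\epsilon_t\alpha_t^2L^2\|\mathbf{x}_{t+1}-\bar{\mathbf{x}}\|^2$ correction. I would fix the Young weights first and then check the constants against \eqref{eq:ep2}.
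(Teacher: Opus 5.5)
Your proposal is correct and follows the paper's proof essentially step for step. It uses the same rewriting of $\langle H_{j,k}^{l},x_{j,k+1}^{l}-\bar{x}_{j}^{l}\rangle$ after Lemma~\ref{lem3}, telescoping through \eqref{eq:ep1} with $\Delta Y_{j,0}^{l}=0$, the split of $\Delta Y$ via $G_{j}^{l}$ and $F_{j}^{l}$ that produces $\delta_k$ and $\delta_{t+1}$, and Young's inequality together with \eqref{eq:ep2} and \eqref{eq:relat}. The only difference is order: the paper first reserves $\frac12\sum_{k}\epsilon_k\bar D_\phi(\mathbf{x}_k,\mathbf{x}_{k+1})$, turns the other half into $\frac{\epsilon_k}{8}\|\mathbf{x}_{k+1}-\mathbf{x}_k\|^2+\frac{\epsilon_{k-1}}{8}\|\mathbf{x}_k-\mathbf{x}_{k-1}\|^2$ to cancel the cross term, and keeps the leftover $\frac{\epsilon_t}{8}\|\mathbf{x}_{t+1}-\mathbf{x}_t\|^2$ for the boundary term; this is the same bookkeeping as yours, so your constants $16$ and $\frac12$ do match.
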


\noindent\textbf{Proof.} Based on \eqref{eq:Y} and \eqref{eq:Dj}, the addition and subtraction of $\alpha_{k}\langle \nabla_{1}f_{j}^{l}(x_{j,k+1}^{l},mv_{j,k+1}^{l})+m^{l}y_{j,k+1}^{l},x_{j,k+1}^{l}-\bar{x}_{j}^{l}\rangle$ in \eqref{eq:opt1}, followed by rearrangement, yield
\begin{equation*}
	\begin{aligned}
		&\Delta D_{j,k}^{l}(\bar{x}_{j}^{l})\geq\alpha_{k}\langle \nabla_{1}f_{j}^{l}(x_{j,k}^{l},mv_{j,k}^{l})+m^{l}y_{j,k}^{l},x_{j,k+1}^{l}-\bar{x}_{j}^{l}\rangle\\		
		&+\alpha_{k}\lambda_{k}\langle \Delta Y_{j,k}^{l},x_{j,k+1}^{l}-\bar{x}_{j}^{l}\rangle+D_{\phi}(x_{j,k}^{l},x_{j,k+1}^{l})+\alpha_{k}\\		
		&\langle \nabla_{1}f_{j}^{l}(x_{j,k+1}^{l},mv_{j,k+1}^{l})+m^{l}y_{j,k+1}^{l},x_{j,k+1}^{l}-\bar{x}_{j}^{l}\rangle-\alpha_{k}\\			
		&\langle \nabla_{1}f_{j}^{l}(x_{j,k+1}^{l},mv_{j,k+1}^{l})+m^{l}y_{j,k+1}^{l},x_{j,k+1}^{l}-\bar{x}_{j}^{l}\rangle\\	
		&=\alpha_{k}\langle \nabla_{1}f_{j}^{l}(x_{j,k+1}^{l},mv_{j,k+1}^{l})+m^{l}y_{j,k+1}^{l},x_{j,k+1}^{l}-\bar{x}_{j}^{l}\rangle\!-\!\\
		&\alpha_{k}\langle\Delta Y_{j,k+1}^{l},x_{j,k+1}^{l}-\bar{x}_{j}^{l}\rangle+\alpha_{k}\lambda_{k}\langle \Delta Y_{j,k}^{l},x_{j,k}^{l}-\bar{x}_{j}^{l}\rangle+\alpha_{k}\lambda_{k}\\
		&\langle \Delta Y_{j,k}^{l},x_{j,k+1}^{l}-x_{j,k}^{l}\rangle+D_{\phi}(x_{j,k}^{l},x_{j,k+1}^{l}).
	\end{aligned}
\end{equation*}
By multiplying both sides of the above inequality by $\epsilon_{k}$, and summing over the indices $j$ from $1$ to $m^{l}$, $l$ from $1$ to $h$ and $k$ from $0$ to $t$, the following is derived:
\begin{equation}
	\begin{aligned}
		&\sum_{k=0}^{t}\epsilon_{k}\Delta D_{k}(\bar{\mathbf{x}})\geq\sum_{k=0}^{t}\epsilon_{k}\alpha_{k}\sum\limits_{l=1}^{h}\sum_{j=1}^{m^{l}}\langle \nabla_{1}f_{j}^{l}(x_{j,k+1}^{l},mv_{j,k+1}^{l})+\\		&m^{l}y_{j,k+1}^{l},x_{j,k+1}^{l}\!-\!\bar{x}_{j}^{l}\rangle\!+\!\sum_{k=0}^{t}\epsilon_{k}\alpha_{k}\lambda_{k}\sum\limits_{l=1}^{h}\sum_{j=1}^{m^{l}}\langle \Delta Y_{j,k}^{l},x_{j,k}^{l}\!-\!\bar{x}_{j}^{l}\rangle\\
		&-\sum_{k=0}^{t}\epsilon_{k}\alpha_{k}\sum\limits_{l=1}^{h}\sum_{j=1}^{m^{l}}\langle\Delta Y_{j,k+1}^{l},x_{j,k+1}^{l}-\bar{x}_{j}^{l}\rangle+\sum_{k=0}^{t}\epsilon_{k}\sum\limits_{l=1}^{h}\sum_{j=1}^{m^{l}}\\
		&D_{\phi}(x_{j,k}^{l},x_{j,k+1}^{l})+\sum_{k=0}^{t}\epsilon_{k}\alpha_{k}\lambda_{k}\sum\limits_{l=1}^{h}\sum_{j=1}^{m^{l}}\langle \Delta Y_{j,k}^{l},x_{j,k+1}^{l}-x_{j,k}^{l}\rangle\\
		&\geq -\epsilon_{t}\alpha_{t}\sum\limits_{l=1}^{h}\sum_{j=1}^{m^{l}}\langle\Delta Y_{j,t+1}^{l},x_{j,t+1}^{l}-\bar{x}_{j}^{l}\rangle+\sum_{k=0}^{t}\epsilon_{k}\alpha_{k}\sum\limits_{l=1}^{h}\\
		&\sum_{j=1}^{m^{l}}\langle \nabla_{1}f_{j}^{l}(x_{j,k+1}^{l},mv_{j,k+1}^{l})+m^{l}y_{j,k+1}^{l},x_{j,k+1}^{l}\!-\!\bar{x}_{j}^{l}\rangle+\\	&\underbrace{\sum_{k=0}^{t}\sum\limits_{l=1}^{h}\sum_{j=1}^{m^{l}}\big(\epsilon_{k}D_{\phi}(x_{j,k}^{l},x_{j,k\!+\!1}^{l})\!+\!\epsilon_{k}\alpha_{k}\lambda_{k}\langle \Delta Y_{j,k}^{l},x_{j,k\!+\!1}^{l}\!-\!x_{j,k}^{l}\rangle\big)}_{\triangleq R_{t}},
	\end{aligned}
	\label{eq:lem41}	
\end{equation}
where the second inequality follows from \eqref{eq:ep1} and the fact that $\Delta Y_{j,0}^{l}=0$ obtained by the initial condition. Further, based on the definitions of $G_{j}^{l}$, $F_{j}^{l}$ and $\hat{N}$, as well as Assumption \ref{asum2}, one can derive
\begin{equation*}
	\begin{aligned}
&\sum\limits_{l=1}^{h}\sum_{j=1}^{m^{l}}\langle \Delta Y_{j,k}^{l},x_{j,k\!+\!1}^{l}\!-\!x_{j,k}^{l}\rangle\!=\!\sum\limits_{l=1}^{h}\sum_{j=1}^{m^{l}}\Big\{\langle\nabla_{1}f_{j}^{l}(x_{j,k}^{l},mv_{j,k}^{l})\!+\!\\
&m^{l}y_{j,k}^{l}\!-\!G_{j}^{l}(x^{l}_{k},mv_{k}^{l}),x_{j,k\!+\!1}^{l}\!-\!x_{j,k}^{l}\rangle\!+\!\langle G_{j}^{l}(x^{l}_{k-1},mv_{k-1}^{l})-\\
&\nabla_{1}f_{j}^{l}(x_{j,k\!-\!1}^{l},mv_{j,k-1}^{l})\!-\!m^{l}y_{j,k-1}^{l},x_{j,k\!+\!1}^{l}\!-\!x_{j,k}^{l}\rangle\!+\!\langle G_{j}^{l}(x^{l}_{k},\\
&mv_{k}^{l})\!-\!F_{j}^{l}(\mathbf{x}_{k}),x_{j,k\!+\!1}^{l}\!-\!x_{j,k}^{l}\rangle+\langle F_{j}^{l}(\mathbf{x}_{k-1})-G_{j}^{l}(x^{l}_{k-1},\\
&mv_{k-1}^{l}),x_{j,k\!+\!1}^{l}\!-\!x_{j,k}^{l}\rangle+\langle F_{j}^{l}(\mathbf{x}_{k})-F_{j}^{l}(\mathbf{x}_{k-1}),x_{j,k\!+\!1}^{l}\!-\!x_{j,k}^{l}\rangle\Big\}\\
\end{aligned}
\end{equation*}
\begin{equation}
\begin{aligned}	
&\geq-\sum\limits_{l=1}^{h}\sum_{j=1}^{m^{l}}\big(\lVert m^{l}y_{j,k}^{l}\!-\!\sum_{i=1}^{m^{l}}\nabla_{2}f_{i}^{l}(x_{i,k}^{l},mv_{i,k}^{l})\rVert_{*}+\lVert m^{l}y_{j,k-1}^{l}\\
&\!-\!\sum_{i=1}^{m^{l}}\nabla_{2}f_{i}^{l}(x_{i,k-1}^{l},mv_{i,k-1}^{l})\rVert_{*}\big)\lVert x_{j,k\!+\!1}^{l}\!-\!x_{j,k}^{l}\rVert-\sum\limits_{l=1}^{h}\sum_{j=1}^{m^{l}}\\
&\big(\lVert G_{j}^{l}(x^{l}_{k},mv_{k}^{l})\!-\!F_{j}^{l}(\mathbf{x}_{k})\rVert_{*}\!+\!\lVert F_{j}^{l}(\mathbf{x}_{k\!-\!1})\!-\!G_{j}^{l}(x^{l}_{k\!-\!1},mv_{k\!-\!1}^{l}) \rVert_{*}\big)\\
&\lVert x_{j,k\!+\!1}^{l}\!-\!x_{j,k}^{l}\rVert+\langle F(\mathbf{x}_{k})-F(\mathbf{x}_{k-1}),\mathbf{x}_{k+1}\!-\!\mathbf{x}_{k}\rangle\\
&\geq\!-\!2\hat{N}\sum\limits_{l=1}^{h}\sum_{j=1}^{m^{l}}\big(\lVert m^{l}y_{j,k}^{l}\!-\!\sum_{i=1}^{m^{l}}\nabla_{2}f_{i}^{l}(x_{i,k}^{l},mv_{i,k}^{l})\rVert_{*}\!+\!\lVert m^{l}y_{j,k\!-\!1}^{l}\\
&\!-\!\sum_{i=1}^{m^{l}}\nabla_{2}f_{i}^{l}(x_{i,k-1}^{l},mv_{i,k-1}^{l})\rVert_{*}+\bar{L}\sum_{i=1}^{m^{l}}\big(\lVert mv_{i,k}^{l}-\psi(\mathbf{x}_{k})\rVert\\
&+\lVert mv_{i,k-1}^{l}-\psi(\mathbf{x}_{k-1})\rVert\big)\big)-L\lVert\mathbf{x}_{k}-\mathbf{x}_{k-1}\rVert\lVert\mathbf{x}_{k+1}\!-\!\mathbf{x}_{k}\rVert\\
&=-\delta_{k}-L\lVert\mathbf{x}_{k}-\mathbf{x}_{k-1}\rVert\lVert\mathbf{x}_{k+1}\!-\!\mathbf{x}_{k}\rVert,
\end{aligned}
\label{eq:lem42}	
\end{equation}
where the last equality is a result of invoking \eqref{eq:delta}. Then, by substituting the above result into $R_{t}$ and applying the relationship \eqref{eq:relat} between the Bregman divergence and the Euclidean distance, one obtains under the initial condition $\mathbf{x}_{0}=\mathbf{x}_{-1}$,
\begin{equation*}
	\begin{aligned}
&R_{t}\geq\sum_{k=0}^{t}\epsilon_{k}\bar{D}_{\phi}(\mathbf{x}_{k},\mathbf{x}_{k+1})\!-\!\sum_{k=0}^{t}\epsilon_{k}\alpha_{k}\lambda_{k}\delta_{k}-\sum_{k=0}^{t}\epsilon_{k}\alpha_{k}\lambda_{k}L\\
&\lVert\mathbf{x}_{k}-\mathbf{x}_{k-1}\rVert\lVert\mathbf{x}_{k+1}\!-\!\mathbf{x}_{k}\rVert	\\
&\geq\frac{1}{2}\sum_{k=0}^{t}\epsilon_{k}\bar{D}_{\phi}(\mathbf{x}_{k},\mathbf{x}_{k\!+\!1})\!-\!\sum_{k=0}^{t}\epsilon_{k}\alpha_{k}\lambda_{k}\delta_{k}\!+\!\sum_{k=0}^{t}\big(\frac{\epsilon_{k}}{8}\lVert\mathbf{x}_{k\!+\!1}\!-\!\mathbf{x}_{k}\rVert^{2}\\
&+\frac{\epsilon_{k-1}}{8}\lVert\mathbf{x}_{k}\!-\!\mathbf{x}_{k-1}\rVert^{2}-\epsilon_{k}\alpha_{k}\lambda_{k}L\lVert\mathbf{x}_{k}-\mathbf{x}_{k-1}\rVert\lVert\mathbf{x}_{k+1}\!-\!\mathbf{x}_{k}\rVert\big)\\
&+\frac{\epsilon_{t}}{8}\lVert\mathbf{x}_{t\!+\!1}\!-\!\mathbf{x}_{t}\rVert^{2}.	
\end{aligned}	
\end{equation*}
Notably, the term $\frac{\epsilon_{k}}{8}\lVert\mathbf{x}_{k\!+\!1}\!-\!\mathbf{x}_{k}\rVert^{2}+\frac{\epsilon_{k-1}}{8}\lVert\mathbf{x}_{k}\!-\!\mathbf{x}_{k-1}\rVert^{2}-\alpha_{k}\lambda_{k}\epsilon_{k}L\lVert\mathbf{x}_{k}-\mathbf{x}_{k-1}\rVert\lVert\mathbf{x}_{k+1}\!-\!\mathbf{x}_{k}\rVert\geq0$ holds by \eqref{eq:ep2} and the Young's inequality. Thus, one has
\begin{equation*}
	\begin{aligned}
R_{t}\geq\frac{1}{2}\sum_{k=0}^{t}\epsilon_{k}\bar{D}_{\phi}(\mathbf{x}_{k},\mathbf{x}_{k+1})\!-\!\sum_{k=0}^{t}\epsilon_{k}\alpha_{k}\lambda_{k}\delta_{k}\!+\!\frac{\epsilon_{t}}{8}\lVert\mathbf{x}_{t\!+\!1}\!-\!\mathbf{x}_{t}\rVert^{2},		\end{aligned}
\end{equation*}		
which is brought back to \eqref{eq:lem41}, yielding
\begin{equation}
	\begin{aligned}
		&\sum_{k=0}^{t}\epsilon_{k}\Delta D_{k}(\bar{\mathbf{x}})\geq\frac{1}{2}\sum_{k=0}^{t}\epsilon_{k}\bar{D}_{\phi}(\mathbf{x}_{k},\mathbf{x}_{k+1})\!-\!\sum_{k=0}^{t}\epsilon_{k}\alpha_{k}\lambda_{k}\delta_{k}+\\
		&\sum_{k=0}^{t}\epsilon_{k}\alpha_{k}\sum\limits_{l=1}^{h}\sum_{j=1}^{m^{l}}\langle \nabla_{1}f_{j}^{l}(x_{j,k+1}^{l},mv_{j,k+1}^{l})+m^{l}y_{j,k+1}^{l},x_{j,k+1}^{l}\\	&\!-\!\bar{x}_{j}^{l}\rangle+\frac{\epsilon_{t}}{8}\lVert\mathbf{x}_{t\!+\!1}\!-\!\mathbf{x}_{t}\rVert^{2}-\epsilon_{t}\alpha_{t}\sum\limits_{l=1}^{h}\sum_{j=1}^{m^{l}}\langle\Delta Y_{j,t+1}^{l},x_{j,t+1}^{l}-\bar{x}_{j}^{l}\rangle.
\end{aligned}
\label{eq:lem44}	
\end{equation}
Proceeding as in the derivation of \eqref{eq:lem42} and using $G_{j}^{l}$, $F_{j}^{l}$, $\hat{N}$ and Assumption \ref{asum2}, one obtains:
\begin{equation*}
	\begin{aligned}
		&\!-\!\sum\limits_{l=1}^{h}\sum_{j=1}^{m^{l}}\langle\Delta Y_{j,t\!+\!1}^{l},x_{j,t\!+\!1}^{l}\!-\!\bar{x}_{j}^{l}\rangle\!=\!\sum\limits_{l=1}^{h}\sum_{j=1}^{m^{l}}\langle\Delta Y_{j,t\!+\!1}^{l},\bar{x}_{j}^{l}\!-\!x_{j,t\!+\!1}^{l}\rangle\\
		&\geq-\delta_{t+1}-L\lVert\mathbf{x}_{t+1}-\mathbf{x}_{t}\rVert\lVert\mathbf{x}_{t+1}\!-\!\bar{\mathbf{x}}\rVert.
\end{aligned}	
\end{equation*}
Then, the following inequality can be derived
\begin{equation*}
	\begin{aligned}		&\frac{\epsilon_{t}}{8}\lVert\mathbf{x}_{t\!+\!1}\!-\!\mathbf{x}_{t}\rVert^{2}-\epsilon_{t}\alpha_{t}\sum\limits_{l=1}^{h}\sum_{j=1}^{m^{l}}\langle\Delta Y_{j,t+1}^{l},x_{j,t+1}^{l}-\bar{x}_{j}^{l}\rangle\\	
   &\geq\frac{\epsilon_{t}}{8}\lVert\mathbf{x}_{t\!+\!1}\!-\!\mathbf{x}_{t}\rVert^{2}\!-\!\epsilon_{t}\alpha_{t}\delta_{t\!+\!1}\!-\!\epsilon_{t}\alpha_{t}L\lVert\mathbf{x}_{t\!+\!1}\!-\!\mathbf{x}_{t}\rVert\lVert\mathbf{x}_{t+1}\!-\!\bar{\mathbf{x}}\rVert\\	
		&\geq-\!\epsilon_{t}\alpha_{t}\delta_{t\!+\!1}-2\epsilon_{t}\alpha_{t}^{2}L^{2}\lVert\mathbf{x}_{t+1}\!-\!\bar{\mathbf{x}}\rVert^{2},
	\end{aligned}	
\end{equation*}
where the last inequality is established by $\frac{1}{8}\lVert\mathbf{x}_{t\!+\!1}\!-\!\mathbf{x}_{t}\rVert^{2}\!-\!\alpha_{t}L\lVert\mathbf{x}_{t\!+\!1}\!-\!\mathbf{x}_{t}\rVert\lVert\mathbf{x}_{t+1}\!-\!\bar{\mathbf{x}}\rVert=(\frac{1}{2\sqrt{2}}\lVert\mathbf{x}_{t\!+\!1}\!-\!\mathbf{x}_{t}\rVert-\sqrt{2}\alpha_{t}L\lVert\mathbf{x}_{t+1}\!-\!\bar{\mathbf{x}}\rVert)^{2}-2\alpha_{t}^{2}L^{2}\lVert\mathbf{x}_{t+1}\!-\!\bar{\mathbf{x}}\rVert^2\geq-2\alpha_{t}^{2}L^{2}\lVert\mathbf{x}_{t+1}\!-\!\bar{\mathbf{x}}\rVert^2$. By substituting the above results into the corresponding terms in \eqref{eq:lem44}, it follows that
\begin{equation*}
	\begin{aligned}
		&\sum_{k=0}^{t}\epsilon_{k}\Delta D_{k}(\bar{\mathbf{x}})\geq\frac{1}{2}\sum_{k=0}^{t}\epsilon_{k}\bar{D}_{\phi}(\mathbf{x}_{k},\mathbf{x}_{k+1})\!-\!\sum_{k=0}^{t}\epsilon_{k}\alpha_{k}\lambda_{k}\delta_{k}+\\
		&\sum_{k=0}^{t}\epsilon_{k}\alpha_{k}\sum\limits_{l=1}^{h}\sum_{j=1}^{m^{l}}\langle \nabla_{1}f_{j}^{l}(x_{j,k+1}^{l},mv_{j,k+1}^{l})+m^{l}y_{j,k+1}^{l},x_{j,k+1}^{l}\\
&\!-\!\bar{x}_{j}^{l}\rangle\!-\!\epsilon_{t}\alpha_{t}\delta_{t\!+\!1}-2\epsilon_{t}\alpha_{t}^{2}L^{2}\lVert\mathbf{x}_{t+1}\!-\!\bar{\mathbf{x}}\rVert^{2}.
	\end{aligned}
\end{equation*}
Incorporating \eqref{eq:dto} into the above inequality and rearranging gives the desired result. $\blacksquare$

\subsection{Main Results}
In this subsection, the convergence property of Algorithm \ref{alg:alg1} is explored with appropriate step sizes in Theorem \ref{th:thm1}, building on Lemma \ref{lem4}. Subsequently, its convergence rate is identified in Theorem \ref{th:thm2} by combining Lemma \ref{lem2} and Theorem \ref{th:thm1}.

\begin{thm}\label{th:thm1}
Under Assumptions \ref{asum1}-\ref{asum4}, consider Algorithm \ref{alg:alg1}. Let the parameters $\alpha_{k}$, $\lambda_{k}$ and $\epsilon_{k}$ satisfy \eqref{eq:ep1} and \eqref{eq:ep2}. Further, suppose that the parameters $\alpha_{k}$ and $\epsilon_{k}$ satisfy
\begin{equation}
\epsilon_{k}\leq\epsilon_{k-1}(2\mu\alpha_{k-1}+1)\text{ and }8L^{2}\alpha_{k}^{2}\leq1.
\label{eq:alep}
\end{equation}
Then
\begin{equation}
	\begin{aligned} &\epsilon_{t}(2\mu\alpha_{t}+\frac{1}{2})\bar{D}_{\phi}(\mathbf{x}_{t+1},\mathbf{x}^{*})+\frac{1}{2}\sum_{k=0}^{t}\epsilon_{k}\bar{D}_{\phi}(\mathbf{x}_{k},\mathbf{x}_{k+1})\\ &\leq\epsilon_{0}\bar{D}_{\phi}(\mathbf{x}_{0},\mathbf{x}^{*})\!+\!2\sum_{k=0}^{t}\epsilon_{k}\alpha_{k}\delta_{k+1}+\epsilon_{0}\alpha_{0}\lambda_{0}\delta_{0}
	\end{aligned}
	\label{eq:th1}	
\end{equation}	
holds for all $t>0$.
\end{thm}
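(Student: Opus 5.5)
The plan is to apply Lemma~\ref{lem4} with $\bar{\mathbf{x}}=\mathbf{x}^{*}$ and then convert its three ingredients into the form of \eqref{eq:th1}. These ingredients are the inner-product term, the negative boundary term $-2\epsilon_{t}\alpha_{t}^{2}L^{2}\lVert\mathbf{x}_{t+1}-\mathbf{x}^{*}\rVert^{2}$, and the $\delta$-sums. The inner product will supply a term $2\mu\bar{D}_{\phi}(\mathbf{x}_{k+1},\mathbf{x}^{*})$ through Assumption~\ref{asum3}, and the first condition in \eqref{eq:alep} will make the Bregman terms telescope.

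\textbf{Step 1 (inner product).} For each $k$ I decompose the estimated gradient as
\[
\nabla_{1}f_{j}^{l}(x_{j,k+1}^{l},mv_{j,k+1}^{l})+m^{l}y_{j,k+1}^{l}=F_{j}^{l}(\mathbf{x}_{k+1})+e_{j,k+1}^{l},
\]
where
\[
e_{j,k+1}^{l}=\Big(m^{l}y_{j,k+1}^{l}-\sum_{i}\nabla_{2}f_{i}^{l}(x_{i,k+1}^{l},mv_{i,k+1}^{l})\Big)+\Big(G_{j}^{l}(x_{k+1}^{l},mv_{k+1}^{l})-G_{j}^{l}(x_{k+1}^{l},\mathbf{1}_{m^{l}}\otimes\mathbf{I}_{d}\psi(\mathbf{x}_{k+1}))\Big).
\]
This is the same add-and-subtract step used in the proofs of Lemma~\ref{lem2} and of \eqref{eq:lem42}. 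Assumption~\ref{asum2}(ii) gives $\lVert e_{j,k+1}^{l}\rVert_{*}\le m^{l}Ey^{l}_{j,k+1}+\bar{L}m\sum_{i}Ev^{l}_{i,k+1}$. Since $\lVert x_{j,k+1}^{l}-x_{j}^{l,*}\rVert\le 2\hat{N}$, the summed error pairing is at least $-\delta_{k+1}$ by \eqref{eq:delta}; this uses only the $(k+1)$-index half of $\delta_{k+1}$, and the other half is nonnegative. For the main part, I use Assumption~\ref{asum3} together with the VI characterization $\langle F(\mathbf{x}^{*}),\mathbf{x}_{k+1}-\mathbf{x}^{*}\rangle\ge 0$ from the Remark. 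This gives $\langle F(\mathbf{x}_{k+1}),\mathbf{x}_{k+1}-\mathbf{x}^{*}\rangle\ge 2\mu\bar{D}_{\phi}(\mathbf{x}_{k+1},\mathbf{x}^{*})$. Hence the inner-product sum in \eqref{eq:lem4} is bounded below by $\sum_{k}\epsilon_{k}\alpha_{k}\big(2\mu\bar{D}_{\phi}(\mathbf{x}_{k+1},\mathbf{x}^{*})-\delta_{k+1}\big)$.

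\textbf{Step 2 (telescoping and boundary term).} Moving the $\delta_{k+1}$ terms to the right, the left side contains $\sum_{k=0}^{t}\epsilon_{k}(1+2\mu\alpha_{k})\bar{D}_{\phi}(\mathbf{x}_{k+1},\mathbf{x}^{*})$, while the right side contains $\sum_{k=0}^{t}\epsilon_{k}\bar{D}_{\phi}(\mathbf{x}_{k},\mathbf{x}^{*})$. The condition $\epsilon_{k+1}\le\epsilon_{k}(1+2\mu\alpha_{k})$ lets every right-side term with $k\ge1$ be absorbed by the left-side term with index $k-1$. What survives is $\epsilon_{t}(1+2\mu\alpha_{t})\bar{D}_{\phi}(\mathbf{x}_{t+1},\mathbf{x}^{*})$ on the left and $\epsilon_{0}\bar{D}_{\phi}(\mathbf{x}_{0},\mathbf{x}^{*})$ on the right. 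For the boundary term, I use $8L^{2}\alpha_{t}^{2}\le1$ and then \eqref{eq:relat} summed blockwise, so that $\lVert\mathbf{x}-\mathbf{x}^{*}\rVert^{2}\le 2\bar{D}_{\phi}(\mathbf{x},\mathbf{x}^{*})$. This gives
\[
2\epsilon_{t}\alpha_{t}^{2}L^{2}\lVert\mathbf{x}_{t+1}-\mathbf{x}^{*}\rVert^{2}\le\tfrac{\epsilon_{t}}{4}\lVert\mathbf{x}_{t+1}-\mathbf{x}^{*}\rVert^{2}\le\tfrac{\epsilon_{t}}{2}\bar{D}_{\phi}(\mathbf{x}_{t+1},\mathbf{x}^{*}).
\]
The coefficient on the left is therefore at least $\epsilon_{t}(2\mu\alpha_{t}+\tfrac12)$, as claimed. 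The term $\tfrac12\sum_{k}\epsilon_{k}\bar{D}_{\phi}(\mathbf{x}_{k},\mathbf{x}_{k+1})$ is carried over unchanged from Lemma~\ref{lem4}.

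\textbf{Step 3 ($\delta$-bookkeeping).} Using \eqref{eq:ep1}, $\epsilon_{k}\alpha_{k}\lambda_{k}=\epsilon_{k-1}\alpha_{k-1}$ for $k\ge1$. Therefore
\[
\sum_{k=0}^{t}\epsilon_{k}\alpha_{k}\lambda_{k}\delta_{k}=\epsilon_{0}\alpha_{0}\lambda_{0}\delta_{0}+\sum_{k=0}^{t-1}\epsilon_{k}\alpha_{k}\delta_{k+1}.
\]
Adding $\epsilon_{t}\alpha_{t}\delta_{t+1}$ from Lemma~\ref{lem4} and the $\sum_{k=0}^{t}\epsilon_{k}\alpha_{k}\delta_{k+1}$ from Step 1 gives at most $2\sum_{k=0}^{t}\epsilon_{k}\alpha_{k}\delta_{k+1}+\epsilon_{0}\alpha_{0}\lambda_{0}\delta_{0}$, which is exactly the right side of \eqref{eq:th1}.

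The main obstacle is Step 1. One has to check that the error between the tracked gradient and $F(\mathbf{x}_{k+1})$ is controlled by $\delta_{k+1}$ with constant at most one, which the $2\hat{N}$ and doubled-index structure of \eqref{eq:delta} should allow. One also has to correctly combine the restricted monotonicity with the VI inequality at $\mathbf{x}^{*}$. The telescoping and parameter manipulations in Steps 2 and 3 are routine.
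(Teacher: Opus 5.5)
Your proposal is correct and follows the paper's proof essentially step for step. Both arguments add and subtract $G_{j}^{l}$ and $F_{j}^{l}$ to bound the pairing below by $2\mu\bar{D}_{\phi}(\mathbf{x}_{k+1},\mathbf{x}^{*})-\delta_{k+1}$, substitute into Lemma~\ref{lem4}, reindex the $\delta$-sums via \eqref{eq:ep1}, absorb the boundary term using \eqref{eq:relat} and $8L^{2}\alpha_{t}^{2}\le1$, and telescope with $\epsilon_{k}\le\epsilon_{k-1}(2\mu\alpha_{k-1}+1)$. Your explicit use of the VI characterization $\langle F(\mathbf{x}^{*}),\mathbf{x}_{k+1}-\mathbf{x}^{*}\rangle\ge0$ alongside Assumption~\ref{asum3} is genuinely needed, and the paper leaves it implicit.
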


\noindent\textbf{Proof.} By introducing $G_{j}^{l}(x^{l}_{k+1},mv_{k+1}^{l})$ and $F_{j}^{l}(\mathbf{x}_{k+1})$ and applying Assumption \ref{asum2}, one has
\begin{equation*}
	\begin{aligned}
&\langle \nabla_{1}f_{j}^{l}(x_{j,k\!+\!1}^{l},mv_{j,k\!+\!1}^{l})\!+\!m^{l}y_{j,k\!+\!1}^{l},x_{j,k\!+\!1}^{l}\!-\!\bar{x}_{j}^{l}\rangle\!=\!\langle \big(\nabla_{1}f_{j}^{l}(x_{j,k\!+\!1}^{l},\\
&mv_{j,k\!+\!1}^{l})\!+\!m^{l}y_{j,k\!+\!1}^{l}\big)\!-\!G_{j}^{l}(x^{l}_{k\!+\!1},mv_{k\!+\!1}^{l}),x_{j,k\!+\!1}^{l}\!-\!\bar{x}_{j}^{l}\rangle\!+\!\langle G_{j}^{l}(x^{l}_{k\!+\!1},\\
&mv_{k+1}^{l})-F_{j}^{l}(\mathbf{x}_{k+1}),x_{j,k+1}^{l}\!-\!\bar{x}_{j}^{l}\rangle+\langle F_{j}^{l}(\mathbf{x}_{k+1}),x_{j,k+1}^{l}\!-\!\bar{x}_{j}^{l}\rangle\\
&\geq\langle F_{j}^{l}(\mathbf{x}_{k+1}),x_{j,k+1}^{l}\!-\!\bar{x}_{j}^{l}\rangle-2\hat{N}\lVert m^{l}y_{j,k\!+\!1}^{l}\!-\!\sum_{i=1}^{m^{l}}\nabla_{2}f_{i}^{l}(x_{i,k\!+\!1}^{l},\\
&mv_{i,k\!+\!1}^{l})\rVert_{*}-2\hat{N}\bar{L}\sum_{i=1}^{m^{l}}\lVert mv_{i,k+1}^{l}-\psi(\mathbf{x}_{k+1})\rVert.
\end{aligned}
\end{equation*}
Further, setting $\bar{x}_{j}^{l}=x_{j}^{l,*}$, and employing Assumption \ref{asum3} and the definition \eqref{eq:delta} of $\delta_{k}$ lead to
\begin{equation*}
	\begin{aligned}
		&\sum\limits_{l=1}^{h}\sum_{j=1}^{m^{l}}\langle \nabla_{1}f_{j}^{l}(x_{j,k\!+\!1}^{l},mv_{j,k\!+\!1}^{l})\!+\!m^{l}y_{j,k\!+\!1}^{l},x_{j,k\!+\!1}^{l}\!-\!x_{j}^{l,*}\rangle\\
		&\geq\langle F(\mathbf{x}_{k\!+\!1}),\mathbf{x}_{k+1}\!-\!\mathbf{x}^{*}\rangle\!-\!\delta_{k\!+\!1} \geq2\mu\bar{D}_{\phi}(\mathbf{x}_{k\!+\!1},\mathbf{x}^{*})\!-\!\delta_{k+1},
	\end{aligned}
\end{equation*}
which is substituted back into \eqref{eq:lem4} with $\bar{x}_{j}^{l}=x_{j}^{l,*}$, and invoking \eqref{eq:ep1}, resulting in
\begin{equation}
	\begin{aligned} &\sum_{k=0}^{t}\epsilon_{k}(2\mu\alpha_{k}+1)\bar{D}_{\phi}(\mathbf{x}_{k+1},\mathbf{x}^{*})+\!\frac{1}{2}\sum_{k=0}^{t}\epsilon_{k}\bar{D}_{\phi}(\mathbf{x}_{k},\mathbf{x}_{k+1})\\
		&\!-\!2\epsilon_{t}\alpha_{t}^{2}L^{2}\lVert\mathbf{x}_{t+1}\!-\!\mathbf{x}^{*}\rVert^{2}\\		&\leq\sum_{k=0}^{t}\epsilon_{k}\bar{D}_{\phi}(\mathbf{x}_{k},\mathbf{x}^{*})\!+\!\sum_{k\!=\!0}^{t}\epsilon_{k}\alpha_{k}\lambda_{k}\delta_{k}\!+\!\epsilon_{t}\alpha_{t}\delta_{t\!+\!1}\!+\!\sum_{k\!=\!0}^{t}\epsilon_{k}\alpha_{k}\delta_{k\!+\!1}\\
		&=\sum_{k=0}^{t}\epsilon_{k}\bar{D}_{\phi}(\mathbf{x}_{k},\mathbf{x}^{*})\!+\!2\sum_{k=0}^{t}\epsilon_{k}\alpha_{k}\delta_{k+1}+\epsilon_{0}\alpha_{0}\lambda_{0}\delta_{0}.
	\end{aligned}
\label{eq:th11}	
\end{equation}	
Given that $\epsilon_{k}\leq\epsilon_{k-1}(2\mu\alpha_{k-1}+1)$ indicated in \eqref{eq:alep}, it follows that
\begin{equation*}
	\begin{aligned} &\sum_{k=0}^{t}\epsilon_{k}(2\mu\alpha_{k}+1)\bar{D}_{\phi}(\mathbf{x}_{k+1},\mathbf{x}^{*})\!-\!2\epsilon_{t}\alpha_{t}^{2}L^{2}\lVert\mathbf{x}_{t+1}\!-\!\mathbf{x}^{*}\rVert^{2}\\		&\geq\epsilon_{t}(2\mu\alpha_{t}+\frac{1}{2})\bar{D}_{\phi}(\mathbf{x}_{t+1},\mathbf{x}^{*})+\sum_{k=0}^{t-1}\epsilon_{k+1}\bar{D}_{\phi}(\mathbf{x}_{k+1},\mathbf{x}^{*})+\\	&\frac{\epsilon_{t}}{2}\big(\bar{D}_{\phi}(\mathbf{x}_{t+1},\mathbf{x}^{*})-4\alpha_{t}^{2}L^{2}\lVert\mathbf{x}_{t+1}\!-\!\mathbf{x}^{*}\rVert^{2}\big)\\
&\geq\epsilon_{t}(2\mu\alpha_{t}+\frac{1}{2})\bar{D}_{\phi}(\mathbf{x}_{t+1},\mathbf{x}^{*})+\sum_{k=0}^{t-1}\epsilon_{k+1}\bar{D}_{\phi}(\mathbf{x}_{k+1},\mathbf{x}^{*}),
	\end{aligned}
\end{equation*}	
where the last inequality is derived from \eqref{eq:relat} and $8L^{2}\alpha_{k}^{2}\leq1$ indicated in \eqref{eq:alep}. The desired result can be obtained by combining the above result with \eqref{eq:th11}. $\blacksquare$

Next, the convergence rate of Algorithm \ref{alg:alg1} is shown under specific parameter selections.

\begin{thm}\label{th:thm2}
Under Assumptions \ref{asum1}-\ref{asum4}, consider Algorithm \ref{alg:alg1}. Let
\begin{equation}
	\begin{aligned}
	&p\!=\!\frac{5L}{\mu}, \alpha_{k}\!=\!\frac{1}{\mu(k\!+\!p\!-\!1)},\epsilon_{k}\!=\!(k\!+\!p\!+\!1)(k\!+\!p), \lambda_{k}\!=\!\frac{\epsilon_{k\!-\!1}\alpha_{k\!-\!1}}{\epsilon_{k}\alpha_{k}}.
	\end{aligned}
\label{eq:par}
\end{equation}	
Then
\begin{equation}
	\begin{aligned} &\bar{D}_{\phi}(\mathbf{x}_{t+1},\mathbf{x}^{*})\!\leq\frac{2(p+1)p}{\!(t\!+\!p\!+\!1)(t\!+\!p)}\bar{D}_{\phi}(\mathbf{x}_{0},\mathbf{x}^{*})\!\\
		&+\!\frac{8(q_{1}+q_{2}(t+1))}{\mu^{2}\!(t\!+\!p\!+\!1)(t\!+\!p)}+\frac{2p(p-1)p_{\delta}}{\mu\!(p\!-\!2)(t\!+\!p\!+\!1)(t\!+\!p)}
	\end{aligned}
\label{eq:th2}
\end{equation}	
holds for all $t\geq0$, where $p_{\delta}\triangleq4\hat{N}m(\bar{m}\tilde{W}_{y}+\bar{L}m\tilde{W}_{v})$, $q_{1}$ and $q_{2}$ are defined in \eqref{eq:th27}.	
\end{thm}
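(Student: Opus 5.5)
The plan is to apply Theorem~\ref{th:thm1} with the choice \eqref{eq:par} and then bound the error sums using Lemma~\ref{lem2}.

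\textbf{Step 1: parameter conditions.} By definition of $\lambda_k$, condition \eqref{eq:ep1} holds identically, and $\epsilon_k\alpha_k=(k+p+1)(k+p)/(\mu(k+p-1))$. Condition \eqref{eq:ep2} reads $\epsilon_{k-1}\ge 16L^2\alpha_k^2\lambda_k^2\epsilon_k=16L^2\epsilon_{k-1}^2\alpha_{k-1}^2/\epsilon_k$, i.e. $\epsilon_k\ge16L^2\alpha_{k-1}^2\epsilon_{k-1}$. This holds because $\epsilon_k\ge\epsilon_{k-1}$ and $16L^2\alpha_{k-1}^2\le 16L^2/(\mu^2(p-2)^2)\le1$ once $p=5L/\mu$ is large enough. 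I will use $L\ge\mu$, which follows from Assumptions~\ref{asum2}(i) and \ref{asum3} together with \eqref{eq:relat}, and then assume (or adjust constants) that $p\ge 5$. The conditions $8L^2\alpha_k^2\le1$ and $p>2$ follow in the same way. For $\epsilon_k\le\epsilon_{k-1}(1+2\mu\alpha_{k-1})$, write
\[
\frac{(k+p+1)(k+p)}{(k+p)(k+p-1)}=\frac{k+p+1}{k+p-1}\le 1+\frac{2}{k+p-2}=1+2\mu\alpha_{k-1},
\]
so it holds directly.

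\textbf{Step 2: apply Theorem~\ref{th:thm1}.} I drop the nonnegative sum $\tfrac12\sum_k\epsilon_k\bar D_\phi(\mathbf x_k,\mathbf x_{k+1})$ and use $2\mu\alpha_t+\tfrac12\ge\tfrac12$. Dividing by $\epsilon_t/2$ gives
\[
\bar D_\phi(\mathbf x_{t+1},\mathbf x^*)\le \frac{2}{\epsilon_t}\Big[\epsilon_0\bar D_\phi(\mathbf x_0,\mathbf x^*)+2\sum_{k=0}^t\epsilon_k\alpha_k\delta_{k+1}+\epsilon_0\alpha_0\lambda_0\delta_0\Big].
\]
The first term is exactly $2(p+1)p\,\bar D_\phi(\mathbf x_0,\mathbf x^*)/((t+p+1)(t+p))$. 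For the last term, note that $\delta_0$ is bounded by the uniform bounds \eqref{eq:vx2}, \eqref{eq:yf2}: from \eqref{eq:delta}, $\delta_0\le 2\hat N\cdot 2m(\bar m\tilde W_y+\bar L m\tilde W_v)=p_\delta$ up to the paper's grouping. We also have $\alpha_0\lambda_0\epsilon_0=\epsilon_{-1}\alpha_{-1}=(p)(p-1)/(\mu(p-2))$. Together these give the third term of \eqref{eq:th2}.

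\textbf{Step 3: the main obstacle.} The hard part is showing that $\sum_{k=0}^t\epsilon_k\alpha_k\delta_{k+1}\le 2(q_1+q_2(t+1))/\mu^2$. Since $\epsilon_k\alpha_k\approx (k+p)/\mu$ grows linearly, I need $\delta_{k+1}=\mathcal O(1/(\mu k))$ with constant coefficients, so that each summand is $\mathcal O(1/\mu^2)$. I will insert Lemma~\ref{lem2} into \eqref{eq:delta}. The terms with geometric factors $\beta\eta^k$ give a summable series $\sum_k(k+p)\eta^k<\infty$, which contributes to $q_1$. The convolution terms $\sum_{s}\eta^{k-s}\alpha_{s-1}(1+2\lambda_{s-1})$ need care. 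First, $\lambda_k=\epsilon_{k-1}\alpha_{k-1}/(\epsilon_k\alpha_k)\le1$, so $1+2\lambda\le3$. Then I will use the standard estimate $\sum_{s=1}^k\eta^{k-s}\alpha_{s-1}\le C\alpha_k/(1-\eta)$. This holds because $\alpha_{s-1}/\alpha_k=(k+p-1)/(s+p-2)$ is at most polynomial in $k-s$, while $\eta^{k-s}$ decays geometrically; one can split the sum at $s=k/2$, or bound the ratio by $(1+(k-s+1))$ and sum $\sum_r(r+2)\eta^r$. With this, $\delta_{k+1}\le c_1\eta^k+c_2\alpha_k$ (the double convolution in \eqref{eq:yerr} is handled by iterating the same estimate), and therefore $\epsilon_k\alpha_k\delta_{k+1}\le c_1'(k+p)\eta^k/\mu+c_2'(k+p)^2/(\mu^2(k+p-1)^2)$. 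Summing from $0$ to $t$ gives a constant $q_1$ plus $q_2(t+1)$ with $q_2$ bounded by a multiple of $(p/(p-1))^2$. I will collect these constants as $q_1,q_2$ in \eqref{eq:th27}, and then multiply by $4/\epsilon_t$ to obtain the middle term of \eqref{eq:th2}, which completes the $\mathcal O(1/t)$ rate.
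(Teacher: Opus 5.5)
Your proposal follows the paper's proof essentially step by step. You apply Theorem~\ref{th:thm1}, discard the nonnegative sum, and bound $\delta_0$ by the uniform bounds \eqref{eq:vx2}, \eqref{eq:yf2}. You use $\epsilon_0\alpha_0\lambda_0=\epsilon_{-1}\alpha_{-1}=p(p-1)/(\mu(p-2))$. You control $\sum_k\epsilon_k\alpha_k\delta_{k+1}$ by inserting Lemma~\ref{lem2} and bounding $\alpha_{s-1}/\alpha_k\le 1+(k-s+1)$ against $\eta^{k-s}$. The paper first passes to $\sum_k\delta_{k+1}/\alpha_k$ via $\epsilon_k\alpha_k\le 2/(\mu^2\alpha_k)$, which is equivalent. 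Your derivation of $p\ge5$ from $L\ge\mu$ and your sharper bound $\lambda_k<1$ are correct, and they make explicit what the paper only asserts.

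One step needs repair: the verification of \eqref{eq:ep2}.
\begin{itemize}
\item You bound $16L^2\alpha_{k-1}^2\le 16L^2/(\mu^2(p-2)^2)=16p^2/(25(p-2)^2)$. This is at most $1$ only when $p\ge10$, not merely $p\ge5$. Since $p=5L/\mu$ is prescribed by the theorem, you cannot enlarge it or ``adjust constants.''
\item In fact the $k=0$ instance, $\epsilon_0\ge16L^2\alpha_{-1}^2\epsilon_{-1}$, genuinely fails at $p=5$, where it reads $3/2\ge16/9$.
\item The fix is to note that \eqref{eq:ep2} is used only in the Young-inequality step of Lemma~\ref{lem4}. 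There the $k=0$ cross term vanishes because $\mathbf{x}_0=\mathbf{x}_{-1}$, so that instance is never needed.
\item For $k\ge1$ you have $\alpha_{k-1}\le 1/(\mu(p-1))$, hence $16L^2\alpha_{k-1}^2\le16p^2/(25(p-1)^2)\le1$ exactly when $p\ge5$. Together with $\epsilon_k\ge\epsilon_{k-1}$, this gives \eqref{eq:ep2} for all $k\ge1$.
\end{itemize}
With that change the argument goes through, up to the bookkeeping of the constants collected in $q_1,q_2$.
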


\noindent\textbf{Proof.} It follows from \eqref{eq:vx2} and \eqref{eq:yf2} that, for all $k\geq0$, $\delta_{k+1}$ is bounded as
\begin{equation}
	\begin{aligned}
		&\delta_{k}\!\leq4\hat{N}m(\bar{m}\tilde{W}_{y}+\bar{L}m\tilde{W}_{v})=p_{\delta}.
	\end{aligned}
\label{eq:th21}
\end{equation}	
Similarly to \citet{Kot22}, it can be verified that \eqref{eq:ep1}, \eqref{eq:ep2}, and \eqref{eq:alep} hold under the parameter selection \eqref{eq:par}. Additionally, for $p\geq5$, straightforward calculations yield
\begin{equation*}
	\begin{aligned}
&\epsilon_{t}(2\mu\alpha_{t}+\frac{1}{2})>\frac{\epsilon_{t}}{2}=\frac{\!(t\!+\!p\!+\!1)(t\!+\!p)}{2},\\
&\epsilon_{k}\alpha_{k}=\frac{\!(k\!+\!p\!+\!1)(k\!+\!p)}{\mu(k\!+\!p-1)}\leq\frac{2}{\mu^{2}\alpha_{k}}.
	\end{aligned}
\end{equation*}	
Combining the above relations with \eqref{eq:th1} and \eqref{eq:th21}, and using $\bar{D}_{\phi}(\mathbf{x}_{k},\mathbf{x}_{k\!+\!1})\geq0$, yields
\begin{equation}
	\begin{aligned} \frac{\!(t\!+\!p\!+\!1)(t\!+\!p)}{2}&\bar{D}_{\phi}(\mathbf{x}_{t\!+\!1},\mathbf{x}^{*})\!\leq\epsilon_{0}\bar{D}_{\phi}(\mathbf{x}_{0},\mathbf{x}^{*})\\
		&\!+\!\frac{4}{\mu^{2}}\sum_{k=0}^{t}\frac{\delta_{k\!+\!1}}{\alpha_{k}}+\epsilon_{0}\alpha_{0}\lambda_{0}p_{\delta}.
	\end{aligned}
\label{eq:th22}	
\end{equation}	
Next, one proceeds to analyze an upper bound on $\sum_{k=0}^{t}\frac{\delta_{k\!+\!1}}{\alpha_{k}}$.
For any $t\geq0$, the following estimates hold:
\begin{equation*}
	\begin{aligned}	&\sum_{k=0}^{t}k\eta^{k}=\eta\big(\sum_{k=0}^{t}k\eta^{k-1}\big)=\eta\big(\sum_{k=0}^{t}\eta^{k}\big)'=\eta\big(\frac{1-\eta^{t+1}}{1-\eta}\big)'\\
		&~~~~~~=\frac{\eta(1-(t+1)\eta^{t}(1-\eta)-\eta^{t+1})}{(1-\eta)^{2}}\leq\frac{\eta}{(1-\eta)^2}\\ &\sum_{k=0}^{t}k(k\!-\!1)\eta^{k\!-\!2}\!=\!\big(\frac{1\!-\!\eta^{t\!+\!1}}{1\!-\!\eta}\big)''\!=\!(\frac{(1\!-\!(t\!+\!1)\eta^{t}(1\!-\!\eta)\!-\!\eta^{t\!+\!1})}{(1\!-\!\eta)^{2}})'\\		
&~~~~~~\!=\!\frac{2\!-\!2(t\!+\!1)\eta^{t}(1\!-\!\eta)-2\eta^{t\!+\!1}\!-\!(t\!+\!1)t\eta^{t\!-\!1}(1\!-\!\eta)^{2}}{(1\!-\!\eta)^{3}}\\
		&~~~~~~\leq\frac{3}{(1-\eta)^3}\\	
		&\sum_{k=0}^{t}\frac{\eta^{k}}{\alpha_{k}}=\sum_{k=0}^{t}\mu(k+p-1)\eta^{k}\leq\frac{\mu\eta}{(1-\eta)^2}+\frac{(p-1)\mu}{1-\eta}\\
	\end{aligned}
\end{equation*}
\begin{equation}
\begin{aligned}
		&\sum_{k=0}^{t}\frac{1}{\alpha_{k}}\sum_{s=1}^{k}\eta^{k-s}\alpha_{s-1}=\sum_{k=0}^{t}\sum_{s=1}^{k}\eta^{k-s}\frac{\alpha_{s-1}}{\alpha_{k}}\\
		&~~~~~~\!=\!\sum_{k=0}^{t}\sum_{s=1}^{k}\eta^{k-s}\frac{k+p-1}{s-2+p}\!=\!\sum_{k=0}^{t}\sum_{s=1}^{k}\eta^{k-s}(1+\frac{k-s+1}{s-2+p})\\
		&~~~~~~\stackrel{(e\triangleq k-s)}{\leq}\sum_{k=0}^{t}\sum_{e=0}^{k-1}\eta^{e}(e+2)\leq\frac{(2-\eta)(t+1)}{(1-\eta)^{2}}\\	
	&\sum_{k=0}^{t}\frac{1}{\alpha_{k}}\sum_{s=1}^{k-1}\eta^{k-1-s}\alpha_{s-1}=\sum_{k=0}^{t}\sum_{s=1}^{k-1}\eta^{k-1-s}(1+\frac{k-s+1}{s+p-2})\\
		&~~~~~~\leq\frac{(3-2\eta)(t+1)}{(1-\eta)^{2}}\\		
&\sum_{k=0}^{t}\frac{1}{\alpha_{k}}\sum_{s=1}^{k-2}\eta^{k-2-s}\alpha_{s-1}\leq\frac{(4-3\eta)(t+1)}{(1-\eta)^{2}}\\
&\sum_{k=0}^{t}\frac{1}{\alpha_{k}}\sum_{s=1}^{k}\eta^{k-s}\alpha_{s-2}\leq\frac{(3-2\eta)(t+1)}{(1-\eta)^{2}}\\
&\sum_{k=0}^{t}\frac{k\eta^{k\!-\!2}}{\alpha_{k}}\!=\!\sum_{k=0}^{t}\mu(k\!+\!p\!-\!1)k\eta^{k\!-\!2}\!=\!\sum_{k=0}^{t}\mu((k\!-\!1)k\!+\!pk)\eta^{k\!-\!2}\\
&~~~~~~\leq\frac{3\mu}{(1-\eta)^3}+\frac{\mu p}{(1-\eta)^2\eta}\\	
&\sum_{k=0}^{t}\frac{k\eta^{k\!-\!3}}{\alpha_{k}}\leq\frac{3\mu}{(1-\eta)^3\eta}+\frac{\mu p}{(1-\eta)^2\eta^2}\\
&\sum_{k=0}^{t}\frac{1}{\alpha_{k}}\sum_{s=1}^{k}\eta^{k-s}\sum_{r=1}^{s-2}\eta^{s-2-r}\alpha_{r-1}\\
&~~~~~~=\sum_{k=0}^{t}\frac{\sum_{s=1}^{k}\eta^{k-s}\alpha_{s-2}}{\alpha_{k}}\frac{\sum_{r=1}^{s-2}\eta^{s-2-r}\alpha_{r-1}}{\alpha_{s-2}}\\
&~~~~~~=\sum_{k=0}^{t}\big(\sum_{s=1}^{k}\eta^{k\!-\!s}(1\!+\!\frac{k\!-\!s\!+\!2}{s\!-\!3\!+\!p})\big)\big(\sum_{r=1}^{s\!-\!2}\eta^{s\!-\!2\!-\!r}(1\!+\!\frac{s\!-\!r\!-\!1}{r\!-\!2\!+\!p})\big)\\
&~~~~~~\leq\sum_{k=0}^{t}(\frac{3-2\eta}{(1-\eta)^{2}})(\frac{2-\eta}{(1-\eta)^{2}})=\frac{(3-2\eta)(2-\eta)(t+1)}{(1-\eta)^{4}}\\
&\sum_{k=0}^{t}\frac{1}{\alpha_{k}}\sum_{s=1}^{k-1}\eta^{k-1-s}\sum_{r=1}^{s-2}\eta^{s-2-r}\alpha_{r-1}\leq\frac{(4-3\eta)(2-\eta)(t+1)}{(1-\eta)^{4}}.
\end{aligned}
\label{eq:th23}	
\end{equation}	
Also, note that
\begin{equation}
		\lambda_{k}=\frac{\epsilon_{k\!-\!1}\alpha_{k\!-\!1}}{\epsilon_{k}\alpha_{k}}=\frac{(k+p-1)^2}{(k+p-2)(k+p+1)}\leq\frac{3}{2}
\label{eq:th24}	
\end{equation}	
holds with $p\geq4$ for any $k\geq0$. Further, by employing \eqref{eq:th23} and \eqref{eq:th24}, one has
\begin{equation*}
	\begin{aligned}
&\sum_{k=0}^{t}\frac{mEv^{l}_{j,k+1}}{\alpha_{k}}\leq m^{2}\beta\hat{N}(\frac{\mu\eta}{(1-\eta)^2}\!+\!\frac{(p-1)\mu}{1-\eta})\\
&~~~~~~+4m^{2}\tilde{W}\beta\frac{(2-\eta)(t+1)}{(1-\eta)^{2}}+8m\tilde{W}(t+1),\\
\end{aligned}
\end{equation*}
\begin{equation}
\begin{aligned}	
&\sum_{k=0}^{t}\frac{mEv^l_{j,k}}{\alpha_{k}}\leq m^{2}\beta\hat{N}(\frac{\mu}{(1-\eta)^2}+\frac{(p-1)\mu}{(1-\eta)\eta})\\
&~~~~~~+4m^{2}\tilde{W}\beta\frac{(3-2\eta)(t+1)}{(1-\eta)^{2}}+16m\tilde{W}(t+1),
\end{aligned}
\label{eq:th25}	
\end{equation}	
where the last inequality is based on $\frac{\alpha_{k-1}}{\alpha_{k}}=\frac{k+p-1}{k+p-2}\leq2$. Similarly, for the gradient tracking error, by leveraging \eqref{eq:th23} and \eqref{eq:th24}, the following can be derived,
\begin{equation}
	\begin{aligned}
		&\sum_{k=0}^{t}\frac{m^lEy^l_{j,k+1}}{\alpha_{k}}\!\leq\!(m^{l})^{2}N_{f}\beta(\frac{\mu\eta}{(1\!-\!\eta)^2}\!+\!\frac{(p-1)\mu}{1\!-\!\eta})\!+\!\\		
		&~~~~~~4(m\!+\!1)(m^{l})^{2}\hat{L}\tilde{W}\beta\frac{(2\!-\!\eta)(t\!+\!1)}{(1\!-\!\eta)^{2}}+16m(m^{l})^{2}\hat{L}\tilde{W}\beta\\	
&~~~~~~\frac{(3-2\eta)(t+1)}{(1-\eta)^{2}}+2(mm^{l}\beta)^{2}\hat{N}\hat{L}\big(\frac{3\mu}{(1-\eta)^3}+\frac{\mu p}{(1-\eta)^2\eta}\big)\\
		&~~~~~~+4(m^{l})^{2}\hat{L}\tilde{W}\beta^{2}\frac{(3-2\eta)(2-\eta)(t+1)}{(1-\eta)^{4}}+8m^{l}(m+1)\\
&~~~~~~\hat{L}\tilde{W}(t+1)+64m^{l}m\hat{L}\tilde{W}(t+1)+4m^2m^{l}\hat{L}\hat{N}\beta\\
		&~~~~~~\big(\frac{\mu}{(1-\eta)^2}+\frac{(p-1)\mu}{(1-\eta)\eta}\big)+16m^2m^{l}\hat{L}\tilde{W}\beta\frac{(3-2\eta)(t+1)}{(1-\eta)^{2}},\\
		&\sum_{k=0}^{t}\frac{m^lEy^l_{j,k}}{\alpha_{k}}\!\leq\!(m^{l})^{2}N_{f}\beta(\frac{\mu}{(1\!-\!\eta)^2}\!+\!\frac{(p-1)\mu}{(1\!-\!\eta)\eta})\!+\!4(m\!+\!1)\\
		&~~~~ (m^{l})^{2}\hat{L}\tilde{W}\beta\frac{(3\!-\!2\eta)(t\!+\!1)}{(1\!-\!\eta)^{2}}+16m(m^{l})^{2}\hat{L}\tilde{W}\beta\\
		&~~~~\frac{(4-3\eta)(t+1)}{(1-\eta)^{2}}+2(mm^{l}\beta)^{2}\hat{N}\hat{L}\big(\frac{3\mu}{(1-\eta)^3\eta}+\\
		&~~~~\frac{\mu p}{(1-\eta)^2\eta^2}\big)+4(m^{l})^{2}\hat{L}\tilde{W}\beta^{2}\frac{(4-3\eta)(2-\eta)(t+1)}{(1-\eta)^{4}}+\\		
		&~~~~16m^{l}(m\!+\!1)\hat{L}\tilde{W}(t\!+\!1)\!+\!96m^{l}m\hat{L}\tilde{W}(t\!+\!1)\!+\!4m^2m^{l}\hat{L}\hat{N}\beta\\	&~~~~\big(\frac{\mu}{(1\!-\!\eta)^2\eta}\!+\!\frac{(p\!-\!1)\mu}{(1\!-\!\eta)\eta^{2}}\big)\!+\!16m^2m^{l}\hat{L}\tilde{W}\beta\frac{(4\!-\!3\eta)(t\!+\!1)}{(1\!-\!\eta)^{2}}.\\	
	\end{aligned}
	\label{eq:th26}
\end{equation}
Combining \eqref{eq:th25} and \eqref{eq:th26}, one proceeds to derive
	\begin{equation*}
\begin{aligned}
&\sum_{k=0}^{t}\frac{\delta_{k\!+\!1}}{\alpha_{k}}\leq2\hat{N}\bar{L}m\Big\{\Big[m^{2}\beta\hat{N}(1\!+\!\eta)(\frac{\mu}{(1\!-\!\eta)^2}\!+\!\frac{(p-1)\mu}{(1\!-\!\eta)\eta})\Big]\\
&\!+\!\Big[4m^{2}\tilde{W}\beta\frac{(5\!-\!3\eta)}{(1\!-\!\eta)^{2}}\!+\!24m\tilde{W}\Big](t\!+\!1)\Big\}\!+\!2\hat{N}m\Big\{\Big[(\bar{m})^{2}N_{f}\beta\\
&(1\!+\!\eta)(\frac{\mu}{(1\!-\!\eta)^2}\!+\!\frac{(p\!-\!1)\mu}{(1\!-\!\eta)\eta})\!+\!2(m\bar{m}\beta)^{2}\hat{N}\hat{L}(1\!+\!\eta)\big(\frac{3\mu}{(1\!-\!\eta)^3\eta}\\
&\!+\!\frac{\mu p}{(1\!-\!\eta)^2\eta^2}\big)\!+\!4m^2\bar{m}\hat{L}\hat{N}\beta(1\!+\!\eta)\big(\frac{\mu}{(1\!-\!\eta)^2\eta}\!+\!\frac{(p\!-\!1)\mu}{(1\!-\!\eta)\eta^{2}}\big)
\Big]\\	
&\!+\!\Big[4(m\!+\!1)\bar{m}^{2}\hat{L}\tilde{W}\beta\frac{(5\!-\!3\eta)}{(1\!-\!\eta)^{2}}\!+\!16m\bar{m}(m\!+\!\bar{m})\hat{L}\tilde{W}\beta\frac{(7\!-\!5\eta)}{(1\!-\!\eta)^{2}}\\
&\!+\!4\bar{m}^{2}\hat{L}\tilde{W}\beta^{2}\frac{(7\!-\!5\eta)(2\!-\!\eta)}{(1\!-\!\eta)^{4}}\!+\!8\bar{m}\hat{L}\tilde{W}(23m\!+\!3)\Big](t\!+\!1)\Big\}\\
&\!=\!\Big\{2\hat{N}\bar{L}m^3\beta\hat{N}(1\!+\!\eta)(\frac{\mu}{(1\!-\!\eta)^2}\!+\!\frac{(p-1)\mu}{(1\!-\!\eta)\eta})\!+\!2\hat{N}m\bar{m}^{2}\\
\end{aligned}
\end{equation*}
\begin{equation}
\begin{aligned}
&N_{f}\beta(1\!+\!\eta)(\frac{\mu}{(1\!-\!\eta)^2}\!+\!\frac{(p\!-\!1)\mu}{(1\!-\!\eta)\eta})+4m^3(\bar{m}\beta\hat{N})^{2}\hat{L}(1\!+\!\eta)\\	
&\big(\frac{3\mu}{(1\!-\!\eta)^3\eta}\!+\!\frac{\mu p}{(1\!-\!\eta)^2\eta^2}\big)\!+\!8m^3\bar{m}\hat{L}\hat{N}^{2}\beta(1\!+\!\eta)\big(\frac{\mu}{(1\!-\!\eta)^2\eta}\!+\!\\
&\frac{(p-1)\mu}{(1\!-\!\eta)\eta^{2}}\big)\Big\}+\Big\{2\hat{N}\bar{L}m(4m^{2}\tilde{W}\beta\frac{(5\!-\!3\eta)}{(1\!-\!\eta)^{2}}\!+\!24m\tilde{W})+\\
&8m(m\!+\!1)\bar{m}^{2}\hat{L}\hat{N}\tilde{W}\beta\frac{(5\!-\!3\eta)}{(1\!-\!\eta)^{2}}\!+\!32m^{2}\bar{m}(m\!+\!\bar{m})\hat{L}\hat{N}\tilde{W}\beta\\
&\frac{(7\!-\!5\eta)}{(1\!-\!\eta)^{2}}\!+\!8m\bar{m}^{2}\hat{L}\hat{N}\tilde{W}\beta^{2}\frac{(7\!-\!5\eta)(2\!-\!\eta)}{(1\!-\!\eta)^{4}}\!+\!16\hat{N}\bar{m}\hat{L}\\
&\tilde{W}m(23m\!+\!3)\Big\}(t+1)\triangleq q_{1}+q_{2}(t+1),
\label{eq:th27}
\end{aligned}
\end{equation}
where for the last equality, the coefficient of the term $t+1$ is defined as $q_{2}$, with the rest defined as $q_{1}$. Invoking the above result into \eqref{eq:th22} and considering $\epsilon_{0}\alpha_{0}\lambda_{0}=\epsilon_{-1}\alpha_{-1}=\frac{p(p-1)}{\mu(p-2)}$ yields
\begin{equation*}
	\begin{aligned}
		&\frac{\!(t\!+\!p\!+\!1)(t\!+\!p)}{2}\bar{D}_{\phi}(\mathbf{x}_{t+1},\mathbf{x}^{*})\!\\
		&\leq\epsilon_{0}\bar{D}_{\phi}(\mathbf{x}_{0},\mathbf{x}^{*})\!+\!\frac{4(q_{1}+q_{2}(t+1))}{\mu^{2}}+\frac{p(p-1)p_{\delta}}{\mu(p-2)}.
	\end{aligned}
\end{equation*}	
Then, the desired result follows directly. $\blacksquare$
\begin{rem}
Theorem \ref{th:thm2} reveals that Algorithm \ref{alg:alg1} achieves a convergence rate of $\mathcal{O}(1/k)$ under the selected step size and parameters. The design of the step size and parameters required to achieve faster convergence remains to be further investigated. Moreover, the upper bound on the number of iterations required to obtain an $\varepsilon$-approximate Nash equilibrium $\tilde{\mathbf{x}}$, i.e., $\bar{D}_{\phi}(\tilde{\mathbf{x}},\mathbf{x}^{*})\leq\varepsilon$ can be computed as $\mathcal{O}(\max\{\frac{L\sqrt{\bar{D}_{\phi}(\mathbf{x}_{0},\mathbf{x}^{*})}}{\mu\sqrt{\varepsilon}},\frac{\sqrt{q_{1}}}{\mu\sqrt{\varepsilon}},\frac{q_{2}}{\mu^{2}\varepsilon},\frac{\sqrt{Lp_{\delta}}}{\mu\sqrt{\varepsilon}}\})$. It can be observed from this bound that, with the other problem- and network-dependent quantities fixed, a smaller valid Lipschitz constant $L$ and a larger valid restricted strong monotonicity constant $\mu$ generally lead to a more favorable iteration-complexity bound and hence fewer iterations to attain a prescribed accuracy. This effect can be understood from the parameter design, since the constants $\mu$ and $L$ determine the step size $\alpha_k$ and the parameter $\lambda_k$, thereby affecting the convergence performance of the algorithm. 
\end{rem}

\section{Numerical Simulations}
\label{sec:NS}

In this section, the demand response management of energy systems is considered to illustrate the effectiveness of the proposed algorithm. It includes $h$ communities, and each community $l\in \lceil h \rceil$ consists of $m^{l}$ electricity users. For each user $j\in \mathcal{M}^{l}$, the electricity consumption over six operating periods is characterized by the strategy vector $x_{j}^{l}\in \Omega_{j}^{l}\subset\mathbb{R}^6$. The local constraint set is modeled as $\Omega_{j}^{l}=\{x_{j}^{l}\in\mathbb{R}^6: (x_{j}^{l}-\nu_{j}^{l})^{\top}Q(x_{j}^{l}-\nu_{j}^{l})\leq R^{2}\}$, where $\nu_{j}^{l}
\in\mathbb{R}^6$ denotes the nominal electricity-consumption profile and $R>0$ characterizes the overall load-adjustment capability, and $Q$ is a positive definite weighting matrix characterizing the adjustment flexibility over different operating periods. The cost function of each user $j$ is given by $f_{j}^{l}(x_{j}^{l},\psi(\mathbf{x}))=\lVert x_{j}^{l}-\nu_{j}^{l}\rVert^{2}+\left\langle U(\psi(\mathbf{x})),x_{j}^{l} \right\rangle$, where $\psi(\mathbf{x})=\sum_{l=1}^{h}\sum_{j=1}^{m^{l}}x_{j}^{l}$ is the aggregate formed by all users, and the unit price is defined as $U(\psi(\mathbf{x}))=a\psi(\mathbf{x})+\tilde{u}$ with $a>0$ and $\tilde{u}\in\mathbb{R}^{6}$ being a positive baseline price vector. In the simulation, consider $3$ communities containing $3$, $4$, and $5$ users, respectively. The interactions within and across communities are described by undirected time-varying communication graphs. For each communication network, the graph at each iteration is periodically selected from a set of three graphs, whose union is connected. Let $a=0.002$ and $\tilde{u}=[4,5,7,9,6,8]^{\top}$. For each user $j\in \mathcal{M}^{l}$, the components of the nominal electricity-consumption profile $\nu_{j}^{l}$ are randomly selected from $[40,80]$. The constraint set is specified by $R=15$ and a diagonal matrix $Q$ with diagonal entries $1,2,4,6,8$, and $10$. Additionally, the function $\phi$ is set as $\phi(x)=\frac{1}{2}x^{\top}Qx$, which yields the Bregman divergence $D_{\phi}(x_{1},x_{2})=\frac{1}{2}(x_{1}-x_{2})^{\top}Q(x_{1}-x_{2})$.

Algorithm \ref{alg:alg1} is implemented with $\alpha_{k}$ and $\lambda_{k}$ chosen according to Theorem \ref{th:thm2}. It is further compared with the Euclidean fully distributed algorithm in \citet{Zhao26} and the hierarchical Euclidean distributed algorithm with semi-decentralized aggregate acquisition in \citet{Chen24}. The parameters of the comparison algorithms are chosen following the settings in the corresponding references. Regarding the communication networks, the algorithm in \citet{Zhao26} employs the same networks as Algorithm \ref{alg:alg1}. For the algorithm in \citet{Chen24}, the same time-varying intra-cluster networks are adopted, while the cluster coordinators communicate over a time-varying network switching among three graphs whose union is connected. The simulation results presented in Fig. \ref{fig_2} show that the sequence $\mathbf{x}_{k}$ generated by Algorithm \ref{alg:alg1} converges to the NE $\mathbf{x}^{*}$. Moreover, Algorithm \ref{alg:alg1} achieves the same accuracy in less computation time than the other two algorithms. Therefore, compared with the Euclidean algorithms, the proposed framework can achieve improved convergence performance and computational efficiency by adapting the distance function to the geometry of the constraint set and incorporating gradient extrapolation based on historical gradient information.

\begin{figure}[!htp]
	\centering
	\includegraphics[width=3.1in]{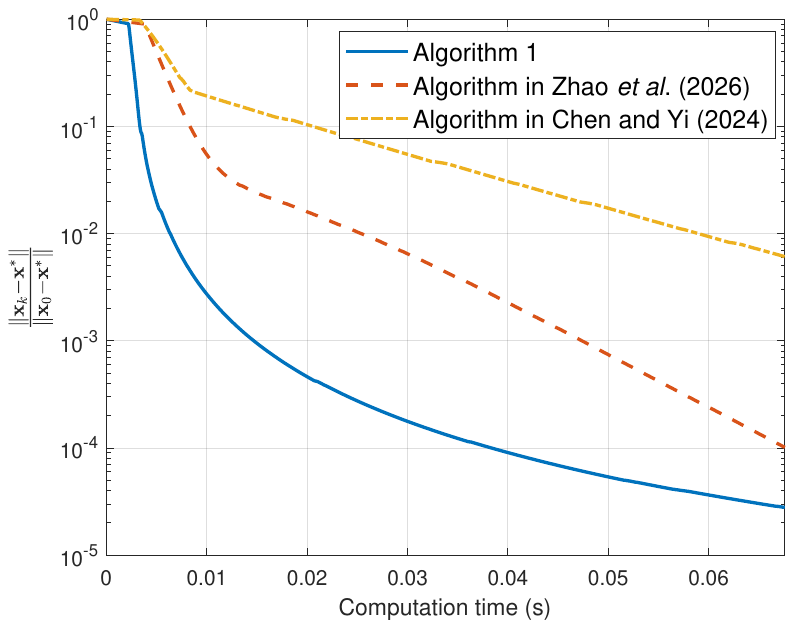}
	\caption{Comparison of the relative error $\frac{\lVert \mathbf{x}_{k}-\mathbf{x}^{*}\rVert}{\lVert \mathbf{x}_{0}-\mathbf{x}^{*}\rVert}$ among Algorithm \ref{alg:alg1} and the algorithms in \citet{Zhao26} and \citet{Chen24}. }
	\label{fig_2}
\end{figure}

\section{Conclusion}
\label{sec:Con}
This paper has designed a distributed algorithm in the non-Euclidean sense for MAGs, where competition and cooperation coexist. This algorithm has combined gradient extrapolation with mirror descent to ensure efficient convergence. It has been verified that this algorithm can converge at a rate of $\mathcal{O}(1/k)$ under the restricted strong monotonicity assumption. Moreover, the effectiveness of the proposed method has been demonstrated by the simulation results. Future work will focus on extending this algorithm to limited communication settings and relaxing the doubly stochastic assumption on adjacency matrices.






\end{document}